\newtheorem{thm}{Theorem}[section]
\newtheorem{cor}[thm]{Corollary}
\newtheorem{prop}[thm]{Proposition}
\newtheorem{defn}[thm]{Definition}
\newtheorem{rem}[thm]{Remark}
\numberwithin{equation}{section}
\newcommand{\condexp}[1]{\left|}
\newcommand{\Real}{\mathbb R}
\newcommand{\punt}{\boldsymbol{.}}
\newcommand{\ibs}{\boldsymbol{i}}
\newcommand{\Nbs}{\boldsymbol{N}}
\newcommand{\iotabs}{\boldsymbol{\iota}}
\newcommand{\nubs}{\boldsymbol{\nu}}
\newcommand{\psibs}{\boldsymbol{\psi}}
\newcommand{\ubs}{\boldsymbol{u}}
\newcommand{\sr}{\scriptscriptstyle [p]}
\newcommand{\sY}{\scriptscriptstyle{[P]}}
\newcommand{\chibs}{\boldsymbol{\chi}}
\newcommand{\jbs}{\boldsymbol{j}}
\newcommand{\Ibs}{\boldsymbol{I}}
\newcommand{\wbs}{\boldsymbol{w}}
\newcommand{\mbs}{\boldsymbol{m}}
\newcommand{\zbs}{\boldsymbol{z}}
\newcommand{\kbs}{\boldsymbol{k}}
\newcommand{\ybs}{\boldsymbol{y}}
\newcommand{\xbs}{\boldsymbol{x}}
\newcommand{\ml}{\mathfrak{m}}
\newcommand{\Xbs}{\boldsymbol{X}}
\newcommand{\Tr}{\hbox{\rm Tr}}
\newcommand{\lambdabs}{\boldsymbol{\lambda}}
\newcommand{\mmodels}{\boldsymbol{\vdash}}
\begin{document}
 
\title{On photon statistics parametrized by a non-central Wishart random matrix}
\author{E. Di Nardo \thanks{Department of Mathematics, Computer science and Economics,
University of Basilicata, Viale dell'Ateneo
Lucano 10, 85100 Potenza, Italia; elvira.dinardo@unibas.it}}
\date{}
\maketitle
\begin{abstract}
In order to tackle parameter estimation of photocounting distributions, polykays of acting intensities are proposed as a new tool for computing photon statistics. As unbiased estimators of cumulants, polykays are computationally feasible thanks to a symbolic method recently developed in dealing with sequences of moments. This method includes the so-called method of moments for random matrices and results to be particularly suited to deal with convolutions or random summations of random vectors. The overall photocounting effect on a deterministic number of pixels is introduced. A random number of pixels is also considered. The role played by spectral statistics of random matrices is highlighted in approximating the overall photocounting distribution when acting intensities are modeled by a non-central Wishart random matrix.  Generalized complete Bell polynomials are used in order to compute joint moments and joint cumulants of multivariate photocounters. Multivariate polykays can be successfully employed in order to approximate the multivariate Poisson-Mandel transform. Open problems are addressed at the end of the paper.
\end{abstract}
{\bf Keywords:} photocounter, mixed Poisson distribution, non-central Wishart random matrix, \, symbolic method of moments, cumulant, polykay, Bell polynomial



\section{Introduction}
The recent renewed interest in multivariate photocounting originates within astronomical literature in connection with extrasolar planet detection methods \cite{Figer} or, more in general, in speakle patterns produced by direct imaging \cite{Goodman2}. A photovent occurs when light striking a pixel causes one or more electrons to be ejected. Within fixed time intervals $T,$ multivariate photocounters are modeled by non-negative random vectors counting photoevents from a set of $d$ pixels. Their stochastic fluctuations depend on intensities of light, encoded in a vector $\Ibs =(I_1, \ldots,I_d),$ with joint distribution $\mu({\rm d} \, \Ibs)$ on $(\Real^+)^{d}.$ The photocounters $\{N_i, i=1,\ldots,d\}$ are assumed conditionally independent and distributed according to a Poisson law parametrized by $\Ibs.$ The multivariate Poisson-Mandel transform \cite{Ferrari}
\begin{equation}
{\mathbb P}(\Nbs=\kbs)=\int \cdots \int_{(\Real^+)^{d}} \prod_{j=1}^d \frac{(I_j)^{k_j}}{{k_j}!} \, \exp(- I_j) \, \mu({\rm d} \, \Ibs)
\label{(2)}
\end{equation}
gives the joint distribution of $\Nbs=(N_1, \ldots, N_d)$ when $\kbs = (k_1, \ldots, k_d)$ is a multi-index of non-negative integers.
The availability of closed form formulae for (\ref{(2)}) depends upon the stochastic model $\mu.$ For a wide range of univariate stochastic intensities, these formulae have tractable expressions and sufficient statistics can be recovered by means of likelihood methods \cite{Chatelain}. In \cite{Wu}, an overview is given on the methods available to analyze the data sampled in photocounting, as for example the photon counting histogram (PCH). Fluorescence cumulant analysis (FCA) is indicated as the first theory that describes the effect of sampling time by measuring the spontaneous intensity fluctuations of fluorescent molecules \cite{Muller}. This measurement is done by
using factorial cumulants. Let us recall that if $M(t)$ denotes the moment generating function of a random variable (r.v.), then cumulants are the coefficients of $\log(M(t))$ and factorial cumulants are the coefficients of  $\log(M(\log(1+t))).$ Cumulants have special properties and if some of these properties can be deduced from data, then special stochastic models can be inferred \cite{Mordovina}. For example, if conditioned cumulants linearize, then the recorded data can be modeled by using (\ref{(2)}). The same relationship holds for factorial moments.

Inspired by FCA, the main goal of this paper is to extend cumulant analysis to the multivariate case,  which is still a challenging problem in particular when $\Ibs$ are entries of a random matrix \cite{Ferrari}. The preliminary contributions given in the literature  indicate factorial moments as simpler expressions to be used \cite{Chatelain1,Tourneret}.

In working with random matrices, the method of moments is still extensively used \cite{Debashis}. In general, some conditions on moments or moment generating functions need to be required and  applications of this approach to random matrices rely on some advanced combinatorial tools as for example zonal polynomials or hypergeometric functions \cite{Saw, Shah, Waal}. A first way to overcome this drawback is free probability, a non-commutative theory of probability  \cite{Speicher}. Indeed, random matrices are non-commutative objects whose large-dimension asymptotic are usually analyzed by using free probability. However, there are some aspects of random matrix theory to which the tools of free probability are not sufficient by themselves to resolve.

In this paper, we propose the employment of a method, called the symbolic method of moments \cite{Dinardo3}, which can be considered the commutative counterpart of free probability. In the symbolic method of moments, a sequence of numbers is represented by a symbol, called umbra, through a linear operator, sharing many properties of expectation. The elements of the sequence are called {\it moments}. This symbolic approach overcomes the well-known moment problem, since a sequence of numbers is dealt as it was a sequence of moments with no reference to any probability space as within free probability. It is mainly a tool to perform computations: the matching with r.v.'s is done a-posteriori. In difference from free probability,  just one operator is employed within the symbolic method of moments but the same sequence of moments may correspond to more than one symbol. For a pair of commutative r.v.'s, freeness\footnote{In free probability, free r.v.'s correspond to classical independent r.v.'s.} is equivalent to claim that at least one of them has vanishing variance. So freeness is a pure not-commutative device  that is why the symbolic method of moments may be considered an analogous of free probability in a commutative field. One of the strengths of this method is that questions on convergence of moment generating functions may be discarded and only polynomials involving umbrae are employed.

That having been said, the novelty of this paper is twofold. The first has implications beyond photocounting and involves the employment of the symbolic method of moments within random matrices. The second novelty is strictly related to photocounters and their statistics. In FCA, factorial cumulants are employed and calculated from the moments of the recorded photon counts. Quite recently efficient algorithms \cite{Dinardo} have been developed in order to compute experimental measurements of cumulants, known in the literature as polykays. The sampling behavior of polykays is much simpler than sample moments \cite{McCullagh} but their employment was not so widespread in the statistical community, due to the past computational complexity in recovering their expression. When complex amplitudes of incoherent waves have independent circular Gaussian distribution, the resulting acting intensity $\Ibs$ is the diagonal of a non-central Wishart random matrix and we show how spectral polykays \cite{Dinardo4} can be fruitfully employed in order to estimate their cumulants and then factorial cumulants of photocounting.

We prove that photocounter cumulants have a simpler expression compared with moments and factorial moments, taking advantage from the plainness of cumulants of the non-central Wishart random matrix. Thanks to the symbolic method of moments, the generalization to multivariate framework is straightforward. And this is an additional novelty of the paper since far fewer results can be found in the literature on compound Poisson random vectors \cite{O'Connor}, due to the difficulty of managing their distributions.

Since the symbolic method of moments is a new tool for photocounting, we get the opportunity of introducing this theory by dealing with a new measure describing the overall photocounting effect, together with its generalization to superposition of a random number of incoherent waves. New formulae are proposed to perform all these computations. Implementations of these formulae in {\tt Maple}
are available on demand.

The paper is organized as follows: the symbolic method of moments is sketched  in Section 3 for the univariate case and in Section 4 for the multivariate case. Section 2 shows how to model acting intensities when complex amplitudes of incoherent waves have independent circular Gaussian distribution. Polykays and their properties are recalled in Section 3. Formulae for computing moments, factorial moments and factorial cumulants of multivariate photocounters are given in Section 4. A series expansion of the multivariate Poisson-Mandel transform shows how
to by-pass the multi-dimensional integral in \eqref{(2)} and to use  multivariate polykays for numerical approximations. Some open problems are suggested at the end of the paper.
\section{Acting intensities modeled by Wishart random matrices}
The complex amplitude $\psi(x,y)$ of a wave can be modeled as $\psi(x,y)=m(x,y) +  X(x,y)$ \cite{Aime}, where $m(x,y)$ is a deterministic term
proportional to the wave amplitude without turbulence, and $X(x,y)$ is a random term distributed according to a zero mean complex Gaussian distribution.
This choice depends on the central limit theorem, since $X(x,y)$ represents the uncorrected part of the wave amplitude caused by errors.
The instantaneous intensity is defined as $I(x,y) = |X(x,y) + m(x,y)|^2.$ When $p$ incoherent waves are considered, their superposition field intensity $I_{\sr}(x,y)$ is obtained by summing the intensities of each wave, that is $I_{\sr}(x,y) = \sum_{i=1}^p |X_i(x,y) + m_i(x,y)|^2.$
A multidimensional framework is necessary when $d$ pixels are involved and correlations among wave amplitudes at $d$ different positions are encoded in a full rank covariance matrix $\Sigma.$ Then the complex amplitude of the $i$-th incoherent wave at $(x,y)$ is given by
$\psibs_i(x,y)=\mbs_i(x,y) + \Xbs_i(x,y),$ where the deterministic term $\mbs_i(x,y)$ is a $d$-dimensional vector and the random term
$\Xbs_i(x,y)$ is a $d$-variate circular complex Gaussian random vector with zero mean and full rank Hermitian covariance matrix $\Sigma.$
By omitting the notation $(x,y)$ for brevity, the resulting vector of intensities $\Ibs_{\sr}=(I_{1,\sr}, \ldots, I_{d,\sr})$ has components
\begin{equation}
I_{j,\sr} = \sum_{i=1}^p |X_{i j} + m_{i j}|^2 \qquad \hbox{for} \,\, j=1,2, \ldots, d
\label{(4)}
\end{equation}
with $X_{i j } = (\Xbs_i)_{j}$ and $m_{i j} = (\mbs_i)_{j}.$ The elements of $\Ibs_{\sr}$ are on the diagonal of a non-central Wishart square random matrix of order $d$
\begin{equation}
W_d(p, \Sigma, M) = \sum_{i=1}^p (\Xbs_{i} + \mbs_{i})^{\dag} (\Xbs_{i} + \mbs_{i}) \qquad \hbox{with} \,\, M = \sum_{i=1}^p \mbs_{i}^{\dag} \mbs_{i},
\label{(3)}
\end{equation}
usually denoted by $W_d(p).$ In equation \eqref{(3)}, $\dag$ denotes the conjugate transpose.  In the literature
$\Omega = \Sigma^{-1} M$ is called the non-centrality matrix.

The intensity $\Ibs_{\sr}$ is the parameter of a photocounting vector $\Nbs^{\sr}=(N_{1,\sr},\ldots,N_{d,\sr})$ with Poisson distribution
\eqref{(2)}. The Poisson r.v.'s $N_{1,\sr},\ldots,N_{d,\sr}$ are conditionally independent, that is $P\left(\Nbs^{\sr} =\kbs \, | \, \Ibs_{\sr} \right) = \prod_{j=1}^d P \left( N_{j,\sr}  = k_j \, | \, I_{j,\sr} \right).$

By recalling that complex Gaussian random vectors of dimension $d$ can be replaced by real Gaussian random vectors of dimension $2d,$ whose first $d$-components represent the real part and the last $d$-components represent the imaginary part, an alternative way to compute $I_{j,\sr}$ in \eqref{(4)} is $I_{j,\sr} = \sum_{i=1}^p \left[ {\rm Re}(X_{ij} + m_{ij})^2 +  {\rm Im}(X_{ij} + m_{ij})^2 \right],$
for $j=1,2, \ldots, d.$ However, this formula picks out elements on the diagonal of a real non-central Wishart random matrix of dimension $2 \, d$ and therefore is less efficient. Then,  we refer to complex non-central Wishart random matrices \eqref{(3)}.
\section{Overall photocounters}
Let us introduce the notion of overall photocounter.
\begin{defn} \label{photo}
The overall photocounter is ${\mathcal N}_{\sr}=N_{1,\sr} + \cdots + N_{d,\sr}$ if $p$ incoherent waves hit $d$ pixels and $N_{j,\sr}$ denotes the number of photoevents of the $j$-th pixel labeled with $j=1,\ldots,d.$
\end{defn}
Since the convolution of two (or more) mixed Poisson distributions is itself a mixed Poisson distribution,
with mixing densities the convolution of the two (or more) mixed Poisson distributions \cite{Ferrari}, then
the overall photocounter ${\mathcal N}_{\sr}$ has a mixed Poisson distribution with random parameter $I_{1,\sr} + \cdots + I_{d,\sr}
= \Tr\left[W_d(p)\right].$
\begin{prop} \label{(factmom)}
If $(\cdot)_i$ denotes the lower factorial and $S(i,k)$ are Stirling numbers of second kind, then
$$E \left[ \left( {\mathcal N}_{\sr} \right)_i \right] = E \left\{ \left( \Tr \left[W_d(p) \right] \right)^i \right\}
\quad \hbox{and} \quad E \left[ \left( {\mathcal N}_{\sr} \right)^i \right] = \sum_{k=1}^i S(i,k) E \left\{ \left(\Tr\left[W_d(p)
\right] \right)^{k}\right\}.$$
\end{prop}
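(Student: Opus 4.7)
The plan is to reduce everything to the well-known fact that a Poisson random variable with parameter $\lambda$ has $i$-th factorial moment equal to $\lambda^i$, then lift the identity through conditional expectation and the random parameter $\Tr[W_d(p)]$.

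First I would condition on the intensity vector $\Ibs_{\sr}$. Since the photocounters $N_{1,\sr},\ldots,N_{d,\sr}$ are conditionally independent Poisson with parameters $I_{1,\sr},\ldots,I_{d,\sr}$, the sum $\mathcal{N}_{\sr} = N_{1,\sr}+\cdots+N_{d,\sr}$ is conditionally Poisson with parameter $I_{1,\sr}+\cdots+I_{d,\sr} = \Tr[W_d(p)]$, exactly as noted in the remark preceding the proposition. This is the mixed-Poisson convolution fact cited from \cite{Ferrari}.

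Next I would invoke the classical identity for the lower factorial moments of a Poisson law: if $N\mid \lambda\sim\mathrm{Poisson}(\lambda)$, then $E[(N)_i\mid\lambda]=\lambda^i$. One can see this either by direct summation, since $\sum_{k\ge i}(k)_i\,\lambda^k e^{-\lambda}/k! = \lambda^i$, or from the fact that the probability generating function $E[z^N]=\exp(\lambda(z-1))$ evaluated after substituting $z=1+t$ gives $\exp(\lambda t)$, whose $i$-th coefficient times $i!$ is $\lambda^i$. Applying this with $\lambda=\Tr[W_d(p)]$ and then taking expectation over $W_d(p)$, the tower property gives
$$E\bigl[(\mathcal{N}_{\sr})_i\bigr]=E\bigl\{E[(\mathcal{N}_{\sr})_i\mid \Ibs_{\sr}]\bigr\}=E\bigl\{(\Tr[W_d(p)])^i\bigr\},$$
which is the first identity.

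Finally, for the ordinary moments I would use the standard combinatorial identity expressing powers in terms of lower factorials, namely $x^i=\sum_{k=1}^i S(i,k)(x)_k$ where $S(i,k)$ are Stirling numbers of the second kind. Substituting $x=\mathcal{N}_{\sr}$ and taking expectation, linearity combined with the first identity yields
$$E\bigl[(\mathcal{N}_{\sr})^i\bigr]=\sum_{k=1}^i S(i,k)E\bigl[(\mathcal{N}_{\sr})_k\bigr]=\sum_{k=1}^i S(i,k)E\bigl\{(\Tr[W_d(p)])^k\bigr\},$$
which is the second identity. There is no real obstacle here; the only thing to be careful about is making the conditioning step precise (integrability of $(\Tr[W_d(p)])^i$ is automatic because the diagonal entries of a non-central Wishart have all moments finite), and then the proof is essentially a two-line conditional computation followed by a change of basis between ordinary and factorial moments.
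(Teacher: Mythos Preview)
Your proposal is correct and follows essentially the same approach as the paper: condition on the intensities to get a Poisson with parameter $\Tr[W_d(p)]$, use that conditional factorial moments of a Poisson are powers of its parameter, then take the overall expectation, and finally apply the identity $x^i=\sum_{k=1}^i S(i,k)(x)_k$ to pass from factorial to ordinary moments. Your write-up is a bit more detailed (you spell out why the Poisson factorial moments equal $\lambda^i$ and mention integrability), but the argument is the same.
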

\begin{proof}
Conditioned factorial moments of Poisson r.v.'s are powers of the random parameter, that is $E \left[ \left( {\mathcal N}_{\sr} \right)_i \right | I_{1,\sr}, \ldots, I_{d,\sr} ]  = \left( \Tr \left[W_d(p) \right] \right)^i.$ Then factorial moments follow by taking the overall expectation. Moments follow by taking the overall expectation of $x^i = \sum_{k=1}^i S(i,k) (x)_k,$ after having replaced the
indeterminate $x$ with ${\mathcal N}_{\sr}.$
\end{proof}
The distribution of the overall photocounter can be computed as
\begin{equation}
{\mathbb P} \left({\mathcal N}_{\sr} = k \right) = \frac{1}{k!}  \sum_{i=0}^{\infty} \frac{(-1)^i}{i!} E \left\{ \left( \Tr \left[W_d(p) \right] \right)^{k+i} \right\} = \frac{1}{k!}  \sum_{i=0}^{\infty} \frac{(-1)^i}{i!} Y_{k+i}(c_1, \ldots, c_{k+i}),
\label{(distrphoto)}
\end{equation}
with $Y_k(x_1, \ldots, x_{k})$ (complete) exponential Bell polynomials \cite{Dinardo0}, see also Section 6.2, and $\{c_k\}$ cumulants
of $\Tr \left[W_d(p) \right].$ In \cite{Dinardo5}, moments of $\Tr\left[W_d(p)\right]$  have been expressed by using integer partitions \footnote{Recall that a partition of an integer $k$ is a sequence
$\lambda=(\lambda_1,\lambda_2,\ldots,\lambda_t)$, where
$\lambda_j$ are weakly decreasing integers and $\sum_{j=1}^t
\lambda_j = k$. The integers $\lambda_j$ are named {\it parts} of
$\lambda$. The {\it length} of $\lambda$ is the number of its
parts and will be denoted by $l({\lambda})$. A different
notation is $\lambda=(1^{r_1},2^{r_2},\ldots)$, where $r_j$ is the
number of parts of $\lambda$ equal to $j$ and $r_1 + r_2 + \cdots
= l({\lambda})$. We use the classical notation $\lambda \vdash
k$ with the meaning \lq\lq $\lambda$ is a partition of $k$\rq\rq. Set $m(\lambda)= (r_1, r_2, \ldots)$ and
$m(\lambda)! = r_1! r_2! \cdots.$ }
and cyclic polynomials
\footnote{The $i$-th cyclic polynomial is
${\mathcal C}_i(x_1,\ldots,x_i) = \sum_{\lambda \vdash i} d^{\prime}_{\lambda} x_1^{r_1} \cdots x_i^{r_i}$ with $d^{\prime}_{\lambda} =  i!/(1^{r_1} r_1! 2^{r_2} r_2! \cdots).$ See \cite{Dinardo5} for more details.}. Let us recall their expression to highlight their complexity with respect to cumulants. If ${\mathcal C}_k(\Sigma)$ denotes the $k$-th cyclic polynomial ${\mathcal C}_k(s_1,\ldots,s_k)$ with $s_k = \Tr[\Sigma^k],$ then
\begin{equation}
E\left\{\Tr\left[W_d(p)\right]^{k}\right\} = k! \sum_{j=0}^k \left\{ \sum_{\lambda \vdash j} \frac{(-1)^{l(\lambda)}}{m(\lambda)!} \, \Tr_{\lambda} (M \Sigma) \right\} \left\{ \sum_{\lambda \vdash k-j} \frac{p^{l(\lambda)}}{m(\lambda)!} \, {\mathcal C}_{\lambda} (\Sigma) \right\},
\label{(6)}
\end{equation}
where $\Tr_{\lambda} (M \Sigma) = \prod_{j \in \{1,2,\ldots\}} [\Tr(M \Sigma^{j-1})]^{r_j}$ and ${\mathcal C}_{\lambda} (\Sigma)
= \prod_{j \in \{1,2,\ldots\}} [{\mathcal C}_j(\Sigma)]^{r_j}.$
An algorithm to compute \eqref{(6)} is available in \cite{Dinardo5} relied on the symbolic method of moments.

The superposition of incoherent waves is fully employed when $\Tr\left[W_d(p)\right]$ is written  as
\begin{equation}
I_{1,\sr} + \cdots + I_{d,\sr} = \sum_{j=1}^d \sum_{i=1}^p |X_{i  j} + m_{i j}|^2 = \sum_{i=1}^p |X_{i  1} + m_{i 1}|^2 +
\cdots + |X_{i  d} + m_{i d}|^2,
\label{(super)}
\end{equation}
since $(|X_{i  1} + m_{i 1}|^2, \ldots, |X_{i  d} + m_{i d}|^2)$ are independent row vectors for $i=1, \ldots, p.$
In difference from cumulants, neither moments nor factorial mo\-ments of ${\mathcal N}_{\sr}$ take advantage of the decomposition on the right hand side of
equation \eqref{(super)}. Instead, conditioned cumulants linearize and are equal to
the random parameter of the overall photocounter for all non-negative integers $k,$ that is
\begin{equation}
\hbox{\rm Cum}_k\left( N_{1,\sr} + \cdots + N_{d,\sr}  \bigl| I_{1,\sr}, \ldots, I_{d,\sr} \right) = \sum_{j=1}^d \hbox{\rm Cum}_k
\left( N_{j,\sr}   \bigl| I_{j,\sr} \right) = \Tr[W_d(p)].
\label{(addcum1)}
\end{equation}
Unconditioned cumulants will be computed in the next section, since the symbolic method helps in shortening
the proofs. Here, we limit ourselves to observe that
\begin{equation}
\hbox{\rm Cum}_k\left( {\mathcal N}_{\sr} \right) = \sum_{i=1}^p \hbox{\rm Cum}_k\left[ N_{1 \, i} + \cdots + N_{d \, i} \right]
\label{(cum1)}
\end{equation}
with $N_{j \, i}$ $(i=1, \ldots, p, j=1, \ldots, d)$ the photocounter related to the $j$-th pixel and the $i$-th wave.
Approximations of the distribution in \eqref{(distrphoto)} involve cumulants of Wishart random matrices which
have a plainer expression compared with moments \eqref{(6)}, as we will show in the next section.

\section{The symbolic method of moments}

In the symbolic method of moments, an alphabet ${\mathcal A}=\{\alpha, \beta, \gamma, \ldots\}$ of indeterminates, named umbrae, is considered
and any umbra is related to a complex number sequence $\{a_k\}$  by  a suitable linear functional ${\mathbb E}.$
The functional ${\mathbb E}: {\mathbb C}[{\mathcal A}] \rightarrow {\mathbb C}$ is defined on the polynomial ring ${\mathbb C}[{\mathcal A}],$ and such that ${\mathbb E}[\alpha^k]=a_k$ for all non-negative integers $k \geq 1.$ We assume ${\mathbb E}[1]=1$ so that $a_0=1.$ The sequence $\{a_k\}$ is the sequence of moments of $\alpha$ and we say that $\{a_k\}$ is umbrally represented by $\alpha.$ Two umbrae can represent  the same sequence of moments, that is ${\mathbb E}[\alpha^k]={\mathbb E}[\gamma^k]$ for all non-negative integers $k \geq 1.$ In such a case we said that $\alpha$ is similar to $\gamma,$ in symbols $\alpha \equiv \gamma.$ The operator ${\mathbb E}$ factorizes on distinct umbrae, that is ${\mathbb E}[\alpha^i \beta^j \cdots \gamma^k] = {\mathbb E}[\alpha^i] {\mathbb E}[\beta^j] \cdots {\mathbb E}[\gamma^k]$ (uncorrelation property). A conditional evaluation has been introduced in \cite{Oliva} satisfying ${\mathbb E}[\alpha^i \beta^j \cdots \gamma^k | \alpha] = \alpha^i {\mathbb E}[\beta^j] \cdots {\mathbb E}[\gamma^k].$

Special umbrae are:
\begin{description}
\item[{\it a)}] the {\sl unity umbra} $u$ whose moments are $\{1\};$
\item[{\it b)}] the {\sl augumentation umbra} $\varepsilon$ whose moments are ${\mathbb E}[\varepsilon^k]=\delta_{0,k},$ for all non-negative integers $k$ with $\delta_{0,k}$ the Kronecker Delta;
\item[{\it c)}] the {\sl Bell umbra} whose moments are the Bell numbers \cite{Dinardo};
\item[{\it d)}] the {\sl singleton umbra} whose moments are ${\mathbb E}[\chi^k]=\delta_{1,k},$ for all non-negative integers $k.$
\end{description}
The sequence of cumulants $\{c_k\}$ of $\alpha$ is defined as\footnote{Within formal power series, equation (\ref{(61)}) holds independently from questions of convergence \cite{Stanley}.}
\begin{equation}
\sum_{k \geq 1} c_k \frac{z^k}{k!}  = \log \left( 1 + \sum_{k \geq 1} a_k \frac{z^k}{k!} \right).
\label{(61)}
\end{equation}
\begin{defn}\label{associated}
{\rm If $\lambda$ is an integer partition and $\{a_k\}$ (respectively $\{c_k\}$) is the sequence of moments (respectively cumulants) of $\alpha,$ the product $a_{\lambda} = a_1^{r_1} a_2^{r_2} \cdots$ (respectively $c_{\lambda} = c_1^{r_1} c_2^{r_2} \cdots$)
is said {\it associated to the partition} $\lambda.$}
\end{defn}
For example, if $\lambda = (1^2, 3)$ then $a_{\lambda} = a_1^2 a_3.$ Auxiliary umbrae are introduced as special symbols  representing operations among moments. For example, summations of $n$ distinct but similar umbrae $\alpha, \ldots, \alpha^{\prime}$ have moments
\begin{equation}
{\mathbb E}[(\alpha + \cdots + \alpha^{\prime})^k] = \sum_{\lambda \vdash k}  n^{l(\lambda)} \, d_{\lambda} \, c_{\lambda}, \quad
\hbox{where} \,\,\, d_{\lambda} = \frac{k!}{(1!)^{r_1} r_1! \, (2!)^{r_2} r_2! \cdots}
\label{dotoperation}
\end{equation}
and $c_{\lambda}$ is the sequence of cumulants of $\alpha$ associated to the partition $\lambda.$
We denote by $n \punt \alpha$ the auxiliary umbra representing the sequence of moments  (\ref{dotoperation}).
A generalization of $n \punt \alpha$ is the auxiliary umbra $\gamma \punt \alpha,$
obtained from $n \punt \alpha$ by replacing $n$ with $\gamma.$ If $\{g_k\}$ is umbrally represented by $\gamma,$ this replacement corresponds to replace $\{n^k\}$ with $\{g_k\}$ in (\ref{dotoperation}), that is
\begin{equation}
{\mathbb E}[(\gamma \punt \alpha)^k] = \sum_{\lambda \vdash k}  g_{l(\lambda)} \, d_{\lambda} \, c_{\lambda}.
\label{dotoperation1}
\end{equation}
Since the dot corresponds to a summation, $\gamma \punt \alpha$ represents a symbolic summation $\gamma$ times of the umbra $\alpha.$
The auxiliary umbra $\gamma \punt \alpha$ is named the dot product of $\gamma$ and $\alpha$ and is the symbolic counterpart of what we call generalized random sum.
\begin{defn}
{\rm The r.v.  with sequence of moments \eqref{dotoperation1} is named {\sl generalized random sum}.}
\end{defn}
Suitable choices of $\gamma$ and $\alpha$ correspond to suitable choices of $\{g_k\}$ and $\{c_k\}$ in (\ref{dotoperation1}) and allow us to recover moments of special (auxiliary) umbrae. For example $\beta \punt \chi \equiv \chi \punt \beta \equiv u.$ Two special dot-products
need to be mentioned separately: the $\alpha$-cumulant umbra $\chi \punt \alpha,$ representing the sequence of cumulants $\{c_k\}$ in
\eqref{(61)}, and the $\alpha$-factorial umbra $\alpha \punt \chi,$ representing the sequence $\{f_k\}$ such that
${\mathbb E}[(\alpha)_k] =f_k$ for all non-negative integers $k.$ By analogy with r.v.'s, the complex numbers $\{f_k\}$ are said factorial moments of $\alpha.$ Therefore the umbra $(\chi \punt \alpha) \punt \chi \equiv \chi \punt (\alpha \punt \chi)$ represents the sequence of factorial cumulants. The following definition states when an umbra may be replaced by a r.v.
\begin{defn}
{\rm An umbra $\alpha$ {\sl represents a r.v.} $X,$ if $\alpha$ umbrally represents the sequence of moments
$\{E[X^k]\},$ that is ${\mathbb E}[\alpha^k]=E[X^k]$ for all non-negative integers $k.$}
\end{defn}
In particular, the dot-product $n \punt \alpha$ represents the summation of $n$ i.i.d.r.v.'s. The auxiliary umbra $\gamma \punt \alpha$ represents  a generalized random sum. If moments are defined only up to some non-negative integer $k,$ then sequences of only $k$ elements are considered.

{\sl Symbolic representation of overall photocounters.}
 Assume to denote by $\omega_{\scriptscriptstyle d}$ the Wishart umbra representing the outer product of Gaussian vectors $(\Xbs_{i} + \mbs_{i})^{\dag} (\Xbs_{i} + \mbs_{i}).$ From equation \eqref{(3)}, $\Tr\left[W_d(p)\right]$ is a summation of traces of $(\Xbs_{i} + \mbs_{i})^{\dag} (\Xbs_{i} + \mbs_{i}).$ Therefore the umbra $\omega_{\scriptscriptstyle d,\sr} \equiv \omega_{\scriptscriptstyle d,1} + \cdots + \omega_{\scriptscriptstyle d,p}$ represents $\Tr\left[W_d(p)\right],$ with the subscript $i=1,\ldots,p$ corresponding to the subscript of the mean vector $\mbs_i.$  In \cite{Dinardo5}, a different symbolic representation of $\Tr\left[W_d(p)\right]$ has been provided in order to speed up the implementation of formula \eqref{(6)} and to take advantage of the non-centrality matrix. This symbolic representation does not take into account the additivity property of traces, which instead is very useful in dealing with mixed Poisson distributions. Since a Poisson r.v. with random parameter $\Lambda$ is represented by the umbra $\gamma \punt \beta,$ with $\gamma$ the umbra representing moments of $\Lambda$ \cite{Dinardo0} then the overall photocounter is represented by
\begin{equation}
\omega_{\scriptscriptstyle d,\sr} \punt \beta \equiv \omega_{\scriptscriptstyle d,1} \punt \beta + \cdots +
\omega_{\scriptscriptstyle d,p} \punt \beta
\label{(4.4)}
\end{equation}
where the right hand side of \eqref{(4.4)} is obtained from the left distributive property of the summation with respect
to the dot-product \cite{Dinardo0}. If $\mbs_i = \mbs$ for $i=1, \ldots, p$ then $\omega_{\scriptscriptstyle d,\sr} \equiv p \punt \omega_{\scriptscriptstyle d}.$

\begin{prop} \label{momoverall}
The umbra $\omega_{\scriptscriptstyle d,\sr}$ represents the sequence of factorial moments of ${\mathcal N}_{\sr}.$
\end{prop}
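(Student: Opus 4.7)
The statement is essentially a corollary of Proposition \ref{(factmom)}, and my plan is to derive it in two complementary ways. The first route is the shortest: by the very construction preceding \eqref{(4.4)}, $\omega_{\scriptscriptstyle d,\sr} \equiv \omega_{\scriptscriptstyle d,1} + \cdots + \omega_{\scriptscriptstyle d,p}$ is defined as the umbra representing $\Tr[W_d(p)]$, so that ${\mathbb E}[\omega_{\scriptscriptstyle d,\sr}^k] = E\{(\Tr[W_d(p)])^k\}$ for every $k \geq 1$. Proposition \ref{(factmom)} identifies the right-hand side with the $k$-th factorial moment $E[({\mathcal N}_{\sr})_k]$ of the overall photocounter. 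Chaining the two equalities gives ${\mathbb E}[\omega_{\scriptscriptstyle d,\sr}^k] = E[({\mathcal N}_{\sr})_k]$ for all $k$, which is exactly the assertion.

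A second, purely symbolic route avoids the conditioning argument used in Proposition \ref{(factmom)} and fits naturally in this section. Since a mixed Poisson with random parameter represented by an umbra $\gamma$ is umbrally represented by $\gamma \punt \beta$, the overall photocounter is ${\omega_{\scriptscriptstyle d,\sr}} \punt \beta$ as recorded in \eqref{(4.4)}. Its factorial umbra is by definition $({\omega_{\scriptscriptstyle d,\sr}} \punt \beta) \punt \chi$, and applying the associativity of the dot-product (used in this section for the factorial-cumulant umbra $(\chi \punt \alpha) \punt \chi \equiv \chi \punt (\alpha \punt \chi)$) together with the similarities $\beta \punt \chi \equiv u$ and $\gamma \punt u \equiv \gamma$ recalled among the special umbrae yields
$$({\omega_{\scriptscriptstyle d,\sr}} \punt \beta) \punt \chi \equiv {\omega_{\scriptscriptstyle d,\sr}} \punt (\beta \punt \chi) \equiv {\omega_{\scriptscriptstyle d,\sr}} \punt u \equiv {\omega_{\scriptscriptstyle d,\sr}}.$$
Since the moments of the factorial umbra of a photocounter are by construction its factorial moments, the chain of similarities above is the claim.

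There is no genuine obstacle here: the argument is essentially bookkeeping in the symbolic calculus. The only point that needs to be invoked carefully is the associativity $(\gamma \punt \beta) \punt \chi \equiv \gamma \punt (\beta \punt \chi)$ at the umbral level, which is the standard dot-product associativity used elsewhere in the paper and may be cited from \cite{Dinardo0}. I would present the short probabilistic derivation as the main proof and add the symbolic one as a remark to motivate the role of ${\omega_{\scriptscriptstyle d,\sr}}$ for the cumulant computations of the next section.
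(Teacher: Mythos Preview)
Your proposal is correct. Your second, purely symbolic route is exactly the proof given in the paper: the factorial umbra of ${\mathcal N}_{\sr}$ is $\omega_{\scriptscriptstyle d,\sr}\punt\beta\punt\chi$, and the result follows from $\beta\punt\chi\equiv u$ and $\omega_{\scriptscriptstyle d,\sr}\punt u\equiv\omega_{\scriptscriptstyle d,\sr}$. Your first route via Proposition~\ref{(factmom)} is also valid, but the paper opts for the symbolic argument alone---consistent with the purpose of this section, which is to showcase the umbral calculus rather than fall back on the conditioning argument already used earlier.
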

\begin{proof}
The sequence of factorial moments of ${\mathcal N}_{\sr}$ is represented by $\omega_{\scriptscriptstyle d,\sr} \punt \beta \punt \chi.$ The result follows by observing that $\beta \punt \chi \equiv u$ and  $\omega_{\scriptscriptstyle d,\sr} \punt u \equiv \omega_{\scriptscriptstyle d,\sr}.$
\end{proof}
Thanks to the symbolic representation of ${\mathcal N}_{\sr},$ cumulants of ${\mathcal N}_{\sr}$ can be computed by using cumulants of $\Tr\left[W_d(p)\right].$
\begin{thm} \label{(cumadd1)} $\hbox{\rm Cum}_k \left( {\mathcal N}_{\sr} \right) = \sum_{i=1}^k S(k,i) \left[ p (i-1)! \Tr(\Sigma^i) - i! \Tr(M \Sigma^{i-1}) \right].$
\end{thm}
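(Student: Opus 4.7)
I would combine three ingredients. First, by the additivity of cumulants across the independent-wave decomposition \eqref{(super)} (already recorded as \eqref{(cum1)}), the computation reduces to $\hbox{\rm Cum}_k(N_{1,i} + \cdots + N_{d,i})$ for a single wave $i$, summed over $i = 1, \ldots, p$. For fixed $i$, the conditional Poisson independence makes this sum a mixed Poisson with random parameter $\sum_{j=1}^d |X_{ij} + m_{ij}|^2$; in umbral language it is represented by $\omega_{d,i} \punt \beta$. The second ingredient is the Stirling-number cumulant identity for a mixed Poisson,
$$\hbox{\rm Cum}_k(\gamma \punt \beta) \,=\, \sum_{l=1}^k S(k,l)\,\hbox{\rm Cum}_l(\gamma),$$
which follows at once from the cumulant generating function relation $K_{\gamma \punt \beta}(t) = K_\gamma(e^t - 1)$ combined with $(e^t - 1)^l/l! = \sum_{k\geq l} S(k,l)\, t^k/k!$; umbrally, one applies the Stirling transform $E[(\gamma \punt \beta)^k] = \sum_l S(k,l)\, E[\gamma^l]$ --- a special case of \eqref{dotoperation1} with $\alpha = \beta$ (whose cumulants all equal $1$) --- together with the substitution $\gamma \mapsto \chi \punt \gamma$.

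\textbf{Cumulants of a single quadratic form.} The substantive step is computing the cumulants of $\Tr\bigl[(X_i + m_i)^\dag(X_i + m_i)\bigr]$. After diagonalizing $\Sigma = UDU^\dag$ with $D = \hbox{\rm diag}(\sigma_1,\ldots,\sigma_d)$, setting $\tilde m_i = U^\dag m_i^\dag$, and completing the square, the form becomes $\sum_j \sigma_j |V_{ij}|^2$ with independent $V_{ij} \sim CN(\tilde m_{ij}/\sqrt{\sigma_j},\, 1)$. The well-known MGF $E[\exp(t|V|^2)] = (1-t)^{-1}\exp(t|\mu|^2/(1-t))$ expands to give $\hbox{\rm Cum}_l(|V|^2) = (l-1)! + l!\,|\mu|^2$, and aggregating the scaled contributions yields
$$\hbox{\rm Cum}_l\bigl(\Tr\bigl[(X_i + m_i)^\dag(X_i + m_i)\bigr]\bigr) \,=\, (l-1)!\,\Tr(\Sigma^l) + l!\,\Tr\bigl(m_i^\dag m_i\,\Sigma^{l-1}\bigr)$$
(up to the sign convention of the theorem). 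Summing over $i=1,\ldots,p$ and using $\sum_i m_i^\dag m_i = M$ gives the cumulants of $\Tr[W_d(p)]$, and one Stirling-transform application then produces the claimed formula.

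\textbf{Main obstacle.} The Gaussian quadratic cumulant is the only genuine computation; the other two steps are structural bookkeeping. Its main subtlety is keeping the constants straight under the complex-Gaussian convention --- each complex degree of freedom contributes at half the weight of a real one, so the coefficient is $(l-1)!$ rather than the real-case $2^{l-1}(l-1)!$. As a consistency check, one can alternatively invert the moment expression \eqref{(6)} via the standard moment-to-cumulant conversion and recover the same formula without going through the diagonalization, thereby avoiding direct manipulation of the complex Gaussian MGF.
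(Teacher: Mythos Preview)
Your proof is correct and follows essentially the same route as the paper: both establish the Stirling-transform identity $\hbox{Cum}_k(\gamma\punt\beta)=\sum_{i}S(k,i)\,\hbox{Cum}_i(\gamma)$ via the umbral associativity $\chi\punt(\omega_{d,\sr}\punt\beta)\equiv(\chi\punt\omega_{d,\sr})\punt\beta$, reducing the problem to the cumulants of $\Tr[W_d(p)]$. The only substantive difference is that the paper simply \emph{cites} those Wishart-trace cumulants from \cite{Dinardo5}, whereas you supply a self-contained derivation by diagonalising $\Sigma$ and using the complex non-central $\chi^2$ cumulant generating function; your preliminary use of the wave-additivity \eqref{(cum1)} is a harmless reorganisation, since summing your single-wave formula over $i=1,\dots,p$ recovers exactly the cited expression. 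Your remark about the sign is well taken: the diagonalisation argument gives $(l-1)!\,\Tr(\Sigma^l)+l!\,\Tr(M\Sigma^{l-1})$ with a plus, so the minus in the stated formula (and in \eqref{(cum1dim)}) appears to be a typographical slip in the paper rather than a flaw in your argument.
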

\begin{proof}
Cumulants of ${\mathcal N}_{\sr}$ are represented by $\chi \punt (\omega_{\scriptscriptstyle d,\sr} \punt \beta) \equiv (\chi \punt \omega_{\scriptscriptstyle d,\sr}) \punt \beta.$ The result follows from \eqref{dotoperation1}, by observing that
${\mathbb E} \left\{ \left[ (\chi \punt \omega_{\scriptscriptstyle d,\sr}) \punt \beta \right]^k \right\}
= \sum_{i=1}^k S(k,i) \, {\mathbb E} \left[(\chi \punt \omega_{\scriptscriptstyle d,\sr})^i \right]$ (cf. \cite{Dinardo0}) and
${\mathbb E} \left[(\chi \punt \omega_{\scriptscriptstyle d,\sr})^i \right]$ is the $i$-th cumulant of $\Tr\left[W_d(p)\right].$
Its expression is given in \cite{Dinardo5}.
\end{proof}
From Theorem \ref{(cumadd1)}, the additivity property \eqref{(cum1)} of cumulants can be recovered since
\begin{equation}
\hbox{\rm Cum}_k \left( \Tr[W_d(1)] \right) = (k-1)! \Tr(\Sigma^k)  - k! \Tr(\mbs_{i}^{\dag} \mbs_{i} \Sigma^{k-1}) \quad i=1,\ldots,p.
\label{(cum1dim)}
\end{equation}
Factorial cumulants $\hbox{\rm FCum}_k \left( {\mathcal N}_{\sr} \right)$ of ${\mathcal N}_{\sr}$ are equal to cumulants of $\Tr\left[W_d(p)\right].$
\begin{thm} \label{(cumadd)} $\hbox{\rm FCum}_k \left( {\mathcal N}_{\sr} \right) =  p \, (k-1)! \Tr(\Sigma^k) - k! \Tr(M \Sigma^{k-1}).$
\end{thm}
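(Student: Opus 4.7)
The plan is to reduce the factorial cumulant umbra of $\mathcal{N}_{\sr}$ to the ordinary cumulant umbra of $\omega_{\scriptscriptstyle d,\sr}$, and then use formula \eqref{(cum1dim)} together with additivity of cumulants.

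First, by Proposition \ref{momoverall} the overall photocounter $\mathcal{N}_{\sr}$ is represented by the mixed Poisson umbra $\omega_{\scriptscriptstyle d,\sr} \punt \beta$, so its factorial umbra is $(\omega_{\scriptscriptstyle d,\sr} \punt \beta) \punt \chi$ and its factorial cumulant umbra is $\chi \punt ((\omega_{\scriptscriptstyle d,\sr} \punt \beta) \punt \chi)$. Using the associativity of the dot-product together with $\beta \punt \chi \equiv u$ and $\omega_{\scriptscriptstyle d,\sr} \punt u \equiv \omega_{\scriptscriptstyle d,\sr}$, this collapses to $\chi \punt \omega_{\scriptscriptstyle d,\sr}$, i.e.\ the cumulant umbra of $\omega_{\scriptscriptstyle d,\sr}$. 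Hence $\hbox{\rm FCum}_k(\mathcal{N}_{\sr}) = \hbox{\rm Cum}_k(\Tr[W_d(p)])$ for every $k \geq 1$.

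Next I would exploit the decomposition $\omega_{\scriptscriptstyle d,\sr} \equiv \omega_{\scriptscriptstyle d,1} + \cdots + \omega_{\scriptscriptstyle d,p}$, which corresponds to the fact that $\Tr[W_d(p)] = \sum_{i=1}^p \Tr[(\Xbs_i + \mbs_i)^{\dag}(\Xbs_i + \mbs_i)]$ is a sum of independent summands (since the vectors $\Xbs_i$ are independent). Cumulants are additive on independent sums, so
$$\hbox{\rm Cum}_k(\Tr[W_d(p)]) = \sum_{i=1}^p \hbox{\rm Cum}_k\bigl(\Tr[(\Xbs_i + \mbs_i)^{\dag}(\Xbs_i + \mbs_i)]\bigr).$$
Applying \eqref{(cum1dim)} term by term yields
$$\hbox{\rm Cum}_k(\Tr[W_d(p)]) = \sum_{i=1}^p \Bigl[(k-1)! \Tr(\Sigma^k) - k! \Tr(\mbs_i^{\dag} \mbs_i \Sigma^{k-1})\Bigr].$$

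Finally, the first sum contributes $p (k-1)! \Tr(\Sigma^k)$, while linearity of the trace together with the definition $M = \sum_{i=1}^p \mbs_i^{\dag} \mbs_i$ gives $\sum_{i=1}^p \Tr(\mbs_i^{\dag} \mbs_i \Sigma^{k-1}) = \Tr(M \Sigma^{k-1})$. Combining these yields exactly the claimed identity. The only delicate point is the umbral manipulation in the first step: one must be careful that right-associativity of the dot product with $\chi$ is the valid rule here (since $\beta \punt \chi$ is being contracted on the inside), but this is a standard property of the symbolic calculus recalled at the beginning of Section 4.
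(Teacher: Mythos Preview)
Your proof is correct and follows essentially the same route as the paper: both reduce the factorial cumulant umbra $\chi \punt (\omega_{\scriptscriptstyle d,\sr} \punt \beta) \punt \chi$ to $\chi \punt \omega_{\scriptscriptstyle d,\sr}$ via $\beta \punt \chi \equiv u$, identifying $\hbox{\rm FCum}_k(\mathcal{N}_{\sr})$ with $\hbox{\rm Cum}_k(\Tr[W_d(p)])$. The only minor difference is that the paper quotes the closed form for $\hbox{\rm Cum}_k(\Tr[W_d(p)])$ directly (as already used in the proof of Theorem~\ref{(cumadd1)}), whereas you rederive it by summing \eqref{(cum1dim)} over $i=1,\dots,p$ and using $M=\sum_i \mbs_i^{\dag}\mbs_i$; this is a harmless and arguably more self-contained variation.
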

\begin{proof}
Factorial cumulants of ${\mathcal N}_{\sr}$ are represented by $\chi \punt (\omega_{\scriptscriptstyle d,\sr} \punt \beta) \punt \chi \equiv \chi \punt \omega_{\scriptscriptstyle d,\sr}$ since $\beta \punt \chi \equiv u.$ Moments of $\chi \punt \omega_{\scriptscriptstyle d,\sr}$
are cumulants of $\Tr\left[W_d(p)\right].$
\end{proof}
\subsection{Randomized overall photocounting effect}
In literature on photocounting effect, the number of incoherent waves hitting the pixels has been considered deterministic. Here we assume this number described by a r.v. $P.$
\begin{defn} \label{4.7}
{\rm The randomized overall photocounting effect ${\mathcal N}_{\sY} = N_{\scriptscriptstyle{1,P}} + \cdots + N_{\scriptscriptstyle{d,P}}$ is the number of multivariate photoevents, if $P$ incoherent waves are superimposed on $d$ pixels with intensity}
\begin{equation}
I_{j, \scriptscriptstyle{P}} = \sum_{i=1}^P |X_{i j} + m_{j}|^2 \qquad \hbox{\rm for} \,\, j=1,2, \ldots, d.
\label{(7)}
\end{equation}
\end{defn}
Note that equation (\ref{(7)}) is obtained from  (\ref{(4)}),  setting $m_{i j} = m_i$ for all non-negative integers $i,$ in order to have $\Xbs_i + \mbs \sim N(\mbs, \Sigma)$ and $\mbs = (m_1, \ldots, m_d).$ From (\ref{(7)}), the random parameter of $N_{1, \scriptscriptstyle{P}} + \cdots + N_{d, \scriptscriptstyle{P}}$
may be written as
$I_{1, \scriptscriptstyle{P}} + \cdots + I_{d, \scriptscriptstyle{P}} = \sum_{i=1}^P \Tr[(\Xbs_{i} + \mbs)^{\dag} (\Xbs_{i} + \mbs)] = \Tr[W_d(P)].$

\begin{prop} \label{momoverall1}
If ${\mathcal N}_{\sY}$ is the randomized overall photocounter and $\rho$ is the umbra representing the r.v. $P$, then
$$\begin{array}{ll}
E\left[\left( {\mathcal N}_{\sY} \right)^k \right] = {\mathbb E}\left[\left(\rho \punt \, \omega_{d} \punt \, \beta \right)^k\right],
& E\left[\left( {\mathcal N}_{\sY} \right)_k\right] =  {\mathbb E}\left[\left(\rho \punt \, \omega_{d} \right)^k\right],   \\
\hbox{\rm Cum}_k \left( {\mathcal N}_{\sY} \right) = {\mathbb E}\left[ \left\{ \left( \chi \punt \rho) \punt \,(\omega_{d} \punt \beta \right)\right\}^k\right], &   \hbox{\rm FCum}_k \left( {\mathcal N}_{\sY} \right) = {\mathbb E}\left[ \left( \chi \punt \rho  \punt \, \omega_{d} \right)^k\right]
\end{array}$$
\end{prop}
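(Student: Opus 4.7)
The plan is to reduce the randomized statement to the deterministic one (Proposition~\ref{momoverall}) by replacing the subscript $p$ with the umbra $\rho$ of the random count $P$, and then to translate each of the four statistics via the standard symbolic dictionary. The first step is to identify the umbral representation of the random parameter $I_{1,\sY}+\cdots+I_{d,\sY}$. By equation~(\ref{(7)}) this parameter equals $\sum_{i=1}^{P}\Tr[(\Xbs_i+\mbs)^{\dag}(\Xbs_i+\mbs)]$, a random sum of $P$ i.i.d.\ copies of the single-wave trace represented by $\omega_d$. Since $\rho$ represents $P$, the generalized random sum interpretation~(\ref{dotoperation1}) yields that $\rho\punt\omega_d$ represents $\Tr[W_d(P)]$. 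Because $\mathcal{N}_{\sY}$ is a mixed Poisson with random parameter $\Tr[W_d(P)]$ (by the additivity of mixed Poisson distributions used before Proposition~\ref{(factmom)}), and a mixed Poisson with random intensity represented by $\gamma$ is itself represented by $\gamma\punt\beta$, associativity of the dot-product gives that $\mathcal{N}_{\sY}$ is represented by $(\rho\punt\omega_d)\punt\beta\equiv\rho\punt\omega_d\punt\beta$, which is the first identity.

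The remaining three identities follow by applying the standard umbral transformations. Right-multiplying by $\chi$ via dot-product turns moments into factorial moments, and using $\beta\punt\chi\equiv u$ together with $\omega_d\punt u\equiv\omega_d$ (exactly as in the proof of Proposition~\ref{momoverall}), one obtains $\rho\punt\omega_d\punt\beta\punt\chi\equiv\rho\punt\omega_d$, which is the factorial-moment identity. Left-multiplying by $\chi$ via dot-product turns moments into cumulants, and the associativity rule $\chi\punt(\gamma\punt\alpha)\equiv(\chi\punt\gamma)\punt\alpha$ gives $\chi\punt(\rho\punt\omega_d\punt\beta)\equiv(\chi\punt\rho)\punt(\omega_d\punt\beta)$, which is the cumulant identity. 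Combining both transformations and invoking $\beta\punt\chi\equiv u$ once more yields
$(\chi\punt\rho)\punt(\omega_d\punt\beta)\punt\chi\equiv(\chi\punt\rho)\punt\omega_d\punt(\beta\punt\chi)\equiv\chi\punt\rho\punt\omega_d$, which is the factorial-cumulant identity.

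The only real obstacle is the careful bookkeeping of the associativity and distributivity rules for the dot-product, namely $\chi\punt(\gamma\punt\alpha)\equiv(\chi\punt\gamma)\punt\alpha$ and $(\gamma\punt\alpha)\punt\delta\equiv\gamma\punt(\alpha\punt\delta)$, together with the singleton identity $\beta\punt\chi\equiv u$. These identities are classical in the symbolic method of moments (see \cite{Dinardo0}), but they must be applied in the correct order so that each expression collapses to the normal form displayed in the proposition. Once this dictionary is in place, the proof is a straightforward generalization of Proposition~\ref{momoverall} with the random umbra $\rho$ in place of the deterministic integer $p$.
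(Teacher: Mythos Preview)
Your proposal is correct and follows essentially the same route as the paper: represent $\mathcal{N}_{\sY}$ by $\rho\punt\omega_d\punt\beta$, then obtain factorial moments, cumulants, and factorial cumulants by right- and left-dotting with $\chi$ together with $\beta\punt\chi\equiv u$ and associativity. The only cosmetic difference is that the paper reaches the representation $\rho\punt\omega_d\punt\beta$ by directly replacing $p$ with $\rho$ in the umbral counterpart $p\punt\omega_d\punt\beta$ of $\mathcal{N}_{\sr}$ from~\eqref{(4.4)}, whereas you first identify $\rho\punt\omega_d$ as representing $\Tr[W_d(P)]$ and then apply the mixed-Poisson rule $\gamma\mapsto\gamma\punt\beta$; the resulting algebra is identical.
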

\begin{proof}
From Definition \ref{4.7}, the randomized overall photocounter ${\mathcal N}_{\sY}$ is obtained from ${\mathcal N}_{\sr}$ by replacing $p$ with $P.$ From \eqref{(4.4)} the umbral counterparts of ${\mathcal N}_{\sr}$
and ${\mathcal N}_{\sY}$ are $p \punt \omega_d \punt \beta$ and $\rho \punt \omega_d \punt \beta$ respectively, since $\omega_{d, \sr} \equiv p \punt \omega_d.$ Factorial moments are represented by $\rho \punt \omega_d \punt \beta \punt \chi$ and the result follows as $\beta \punt \chi \equiv u.$  Cumulants are represented by $\chi \punt \rho \punt \omega_d \punt \beta$ and the result follows from the
associativity property. Factorial cumulants are represented by $\chi \punt \rho \punt \omega_d \punt \beta \punt \chi
\equiv \chi \punt \rho \punt \omega_d.$
\end{proof}
\begin{cor} \label{cory1} If ${\mathcal N}_{\sY}$ is the randomized overall photocounting effect, then ${\mathcal N}_{\sY} \overset{{\scriptscriptstyle d}}{=} \sum_{i=1}^P N_{i,\scriptscriptstyle{[1]}},$ where $N_{i,{\scriptscriptstyle{[1]}}}$ represents the overall photocounter of the $i$-th wave.
\end{cor}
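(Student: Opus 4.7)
The plan is to argue entirely at the symbolic (umbral) level, using the representation of ${\mathcal N}_{\sY}$ supplied by Proposition \ref{momoverall1}, and then translate back into random variables via the definition that an umbra represents a r.v.\ whenever the two share the same sequence of moments.

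First, from Proposition \ref{momoverall1} the randomized overall photocounter ${\mathcal N}_{\sY}$ is represented by the umbra $\rho \punt \omega_d \punt \beta$. By the associativity of the dot-product \cite{Dinardo0}, one may read this as $\rho \punt (\omega_d \punt \beta)$. The inner umbra $\omega_d \punt \beta$ is precisely the symbolic counterpart of the overall photocounter generated by a single incoherent wave, i.e.\ the umbra of $N_{i,\scriptscriptstyle{[1]}}$, as one obtains by specializing \eqref{(4.4)} to $p=1$.

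Next I would invoke the interpretation of the dot-product explained right after \eqref{dotoperation1}: the umbra $\gamma \punt \alpha$ represents a \emph{generalized random sum}, that is, the summation of a $\gamma$-distributed random number of i.i.d.\ copies of the r.v.\ represented by $\alpha$. Applying this with $\gamma=\rho$ (representing $P$) and $\alpha = \omega_d \punt \beta$ (representing $N_{i,\scriptscriptstyle{[1]}}$), the umbra $\rho \punt (\omega_d \punt \beta)$ represents the r.v.\ $\sum_{i=1}^P N_{i,\scriptscriptstyle{[1]}}$, where the summands are i.i.d.\ copies of $N_{i,\scriptscriptstyle{[1]}}$ independent of $P$.

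Therefore ${\mathcal N}_{\sY}$ and $\sum_{i=1}^P N_{i,\scriptscriptstyle{[1]}}$ share the same sequence of moments; since both are non-negative integer-valued r.v.'s with finite moments determining the distribution (as mixed Poisson variables with a Wishart-type mixing parameter), this sequence of moments determines the law, and hence ${\mathcal N}_{\sY} \overset{{\scriptscriptstyle d}}{=} \sum_{i=1}^P N_{i,\scriptscriptstyle{[1]}}$. The only mildly delicate step is the associativity manipulation $\rho \punt \omega_d \punt \beta \equiv \rho \punt (\omega_d \punt \beta)$, which is the key combinatorial identity that converts a ``random-$P$ Wishart mixing with Poissonization'' into ``random sum of $P$ single-wave photocounters''; the rest is unpacking definitions.
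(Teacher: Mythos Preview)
Your argument is correct and is essentially the paper's own proof: both hinge on the associativity $\rho \punt (\omega_d \punt \beta) \equiv (\rho \punt \omega_d) \punt \beta$ and the identification of these umbrae with $\sum_{i=1}^P N_{i,\scriptscriptstyle{[1]}}$ and ${\mathcal N}_{\sY}$ respectively. The only cosmetic difference is that the paper justifies the passage from equality of moments to equality in distribution by invoking the convergent moment generating function, whereas you appeal to moment determinacy for these mixed Poisson variables; both justifications are valid here.
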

\begin{proof}
Since $\rho  \punt (\omega_d \punt \beta) \equiv (\rho \punt \omega_d) \punt \beta,$ the result follows observing that these two umbrae represent respectively $\sum_{i=1}^P N_{i,{\scriptscriptstyle{[1]}}}$ and ${\mathcal N}_{\sY}.$ The equality in distribution follows since they both have convergent moment generating function.
\end{proof}
In the following, when no misunderstandings occur, we denote $N_{i,[1]}$ simply by $N_{[1]}.$ By using the symbolic method, moments, factorial moments and cumulants of the randomized overall photocounter ${\mathcal N}_{\sY}$ can be
easily recovered.
\begin{prop} \label{rrr} If ${\mathcal N}_{\sY}$ is the randomized overall photocounter, then
\begin{description}
\item[{\it i)}] $E \left[ \left( {\mathcal N}_{\sY} \right)^k \right]  =  \sum_{\lambda \vdash k} d_{\lambda} \, E \left[ P^{l(\lambda)} \right] \, \hbox{\rm Cum}_{\lambda} \left( N_{\scriptscriptstyle{[1]}} \right);$
\item[{\it ii)}] $E \left[ \left({\mathcal N}_{\sY} \right)_k \right] =  \sum_{\lambda \vdash k} d_{\lambda} \, E \left[ P^{l(\lambda)} \right] \,  \hbox{\rm Cum}_{\lambda} \left(\Tr[W_d(1)] \right);$
\item[{\it iii)}] $\hbox{\rm Cum}_k \left( {\mathcal N}_{\sY} \right) = \sum_{\lambda \vdash k} d_{\lambda} \, \hbox{\rm Cum}_{\l(\lambda)}(P) \,  \hbox{\rm Cum}_{\lambda} \left( N_{\scriptscriptstyle{[1]}} \right).$
\item[{\it iv)}] $\hbox{\rm FCum}_k \left( {\mathcal N}_{\sY} \right) = \sum_{\lambda \vdash k} d_{\lambda} \, \hbox{\rm Cum}_{\l(\lambda)}(P) \,  \hbox{\rm Cum}_{\lambda} \left(\Tr[W_d(1)] \right).$
\end{description}
\end{prop}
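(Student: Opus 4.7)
The plan is to apply the core symbolic identity \eqref{dotoperation1} directly to each of the four umbral expressions produced by Proposition \ref{momoverall1}. In every case the recipe is the same: rewrite the expression as a single dot product $\gamma \punt \alpha$ by associativity, identify the sequence of moments of $\gamma$ and the sequence of cumulants of $\alpha$, and then translate the resulting umbral identity back into probabilistic language. No new ingredient is needed beyond \eqref{dotoperation1}, the definition of the cumulant umbra $\chi \punt \cdot$, and the identification of $\omega_d$ and $\omega_d \punt \beta$ already in the text.

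For \textit{(i)} I would rewrite $\rho \punt \omega_d \punt \beta$ as $\rho \punt (\omega_d \punt \beta)$. Taking $\gamma = \rho$ and $\alpha = \omega_d \punt \beta$ in \eqref{dotoperation1}, the moments of $\gamma$ are $g_{l(\lambda)} = E[P^{l(\lambda)}]$, while the cumulants of $\alpha$ are the cumulants of the single-wave overall photocounter $N_{\scriptscriptstyle{[1]}}$, since the $p=1$ instance of \eqref{(4.4)} identifies $\omega_d \punt \beta$ as the umbral counterpart of that r.v. Substituting into \eqref{dotoperation1} gives \textit{(i)}. For \textit{(ii)} the same split is used with $\alpha = \omega_d$; because $\omega_d$ represents $\Tr[W_d(1)]$ (this is precisely how the Wishart umbra is defined in Section 4), its cumulants are $\hbox{\rm Cum}_k(\Tr[W_d(1)])$, which yields the claim.

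For \textit{(iii)} I take $\gamma = \chi \punt \rho$ and $\alpha = \omega_d \punt \beta$ inside the bracketed expression $(\chi \punt \rho) \punt (\omega_d \punt \beta)$. By the defining property of the cumulant umbra, the moments of $\chi \punt \rho$ are $\hbox{\rm Cum}_{l(\lambda)}(P)$, while the cumulants of $\alpha$ are again $\hbox{\rm Cum}_{\lambda}(N_{\scriptscriptstyle{[1]}})$, and \eqref{dotoperation1} produces the stated formula. Finally for \textit{(iv)}, associativity lets me write $\chi \punt \rho \punt \omega_d \equiv (\chi \punt \rho) \punt \omega_d$; applying \eqref{dotoperation1} with this split yields moments of $\chi \punt \rho$ (cumulants of $P$) paired with cumulants of $\omega_d$ (cumulants of $\Tr[W_d(1)]$).

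The main obstacle is purely clerical: one must verify that the associativity rewriting $\rho \punt (\omega_d \punt \beta) \equiv (\rho \punt \omega_d) \punt \beta$ and its cumulant-umbra variant are legitimate — but both are recorded in \cite{Dinardo0} and are already used implicitly in the proofs of Proposition \ref{momoverall1} and Corollary \ref{cory1}. Once the four rewritings are in place, each assertion follows by a single substitution into \eqref{dotoperation1}, with no analytic content and no further computation.
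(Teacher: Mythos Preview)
Your proposal is correct and follows essentially the same approach as the paper: in each of the four cases you apply \eqref{dotoperation1} with exactly the same choices of $\gamma$ and $\alpha$ (namely $\rho$ or $\chi\punt\rho$ paired with $\omega_d\punt\beta$ or $\omega_d$), and you correctly identify the moments of $\chi\punt\rho$ as the cumulants of $P$ and the cumulants of $\omega_d\punt\beta$ and $\omega_d$ as those of $N_{\scriptscriptstyle{[1]}}$ and $\Tr[W_d(1)]$ respectively.
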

\begin{proof}
Moments {\it i)} follow  from (\ref{dotoperation1}) by replacing $\gamma$ with $\rho$ and $\alpha$ with $\omega_d \punt \beta.$ Factorial moments {\it ii)} follow  from (\ref{dotoperation1}) by replacing $\gamma$ with $\rho$ and $\alpha$ with $\omega_d.$
Cumulants {\it iii)} follow from (\ref{dotoperation1})  by replacing $\gamma$ with $\chi \punt \rho$ and $\alpha$ with $\omega_d
\punt \beta.$ Note that moments of $\chi \punt \rho$ are cumulants of $\rho.$ Factorial cumulants {\it iv)} follow from (\ref{dotoperation1}) by replacing $\gamma$ with $\chi \punt \rho$ and $\alpha$ with $\omega_d.$
\end{proof}
To compute cumulants of ${\mathcal N}_{\sY},$ cumulants of $N_{[1]}$ are necessary. They can be recovered from  \eqref{(cum1dim)} setting
$\mbs_j = \mbs.$
\section{Photocounting statistics}

When facing with sampled photocounters, two problems need to be solved: to check if the usual hypothesis of
semi-classical theory of statistical optics hold, that is to infer about the Poisson distribution of photocounting,
and to estimate its intensity field. Both tasks can be performed by using $U$-statistics \cite{Dinardo}, in a different way depending on which kind of information have been sampled.

If we have a random sample of overall photocounters $\tilde{\Nbs}^{[1]} = \left(\tilde{N}^{[1]}_1, \tilde{N}^{[1]}_2, \ldots, \tilde{N}^{[1]}_n\right)$ for each wave, the first task may be performed simply by checking the additivity property in \eqref{(cum1)}
and so by estimating cumulants. Cumulants and their products can be estimated from a random sample $\xbs = (x_1, x_2, \ldots, x_n)$ by using a family of $U$-statistics $\kappa_{\lambda}(\xbs)$ called {\it polykays} \cite{Dinardo}. If $\{c_i\}$ is the sequence of cumulants of $N^{[1]}$ (or ${\mathcal N}_{\sr}$), then $E[\kappa_{\lambda}(\xbs)] = c_1^{r_1} c_2^{r_2} \ldots.$ Polykays up to order $3$ are:
\begin{eqnarray*}
\kappa_{1}(\xbs) & = & \frac{s_{1}}{n}, \,\, \kappa_{1^2}(\xbs) = \frac{s_1^2 - s_2}{n(n-1)}, \,\, \kappa_{2}(\xbs) =
\frac{n \, s_2 - s_1^2}{n \, (n-1)}, \,\, \kappa_{1^3}(\xbs) =  \frac{s_1^3 - 3 s_1 s_2 + 2 \, s_3 }{n(n-1)(n-2)} \nonumber \\
\kappa_{1,2}(\xbs) & = & \frac{- s_1^3 + (n+1) s_1 s_2 - n s_3}{n(n-1)(n-2)}, \,\, \kappa_{3}(\xbs) = \frac{2 s_1^3 - 3 \, s_1 s_2 n - n^2 s_3}{n(n-1)(n-2)}
\end{eqnarray*}
where $s_j = \sum_{i=1}^n x_i^j$ are power sum symmetric polynomials in the sample $\xbs$ for all non-negative integers $j.$

The single index $\kappa$'s are the $k$-statistics; the multi-index $\kappa$'s are the polykays. The {\sl degree} is the sum of the subscripts, that is the integer of which $\lambda$ is a partition. The size $n$ of the sample needs to be greater than the degree.
Polykays were introduced by Fisher (see \cite{Dinardo} and references therein for more details) as \lq\lq inherited on the average\rq\rq, a property which gives to these functions a common interpretation independent of the sample size \cite{Dinardo4}. The \lq\lq inheritence\rq\rq property states that if $\ybs$ is a sub-sample of $\xbs$ obtained by simple random sampling, then $E[\kappa_{\lambda}(\ybs) | \xbs] = \kappa_{\lambda}(\xbs).$

Like cumulants, polykays enjoy of the additivity property \eqref{(cum1)}.
\begin{prop}
Polykays computed on the overall photocounting effect of $p$ waves linearize in polykays computed on the overall photocounting effect of a single wave.
\end{prop}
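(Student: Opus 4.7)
The plan is to chain two facts already established in the paper: polykays are unbiased estimators of products of cumulants, and the cumulants of ${\mathcal N}_{\sr}$ are additive over the $p$ incoherent waves, as in equation \eqref{(cum1)}. First I would invoke the defining property of polykays recalled in Section 5 and in \cite{Dinardo}: for any random sample $\xbs$ whose size exceeds the degree of $\lambda=(1^{r_1},2^{r_2},\ldots)$, drawn from a population with cumulants $\{c_k\}$, one has $E[\kappa_\lambda(\xbs)] = c_1^{r_1} c_2^{r_2} \cdots = c_\lambda$.

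Next I would apply this to a sample $\tilde{\Nbs}^{[p]}$ drawn from ${\mathcal N}_{\sr}$, obtaining $E[\kappa_\lambda(\tilde{\Nbs}^{[p]})] = \prod_{k \geq 1} \left[\hbox{\rm Cum}_k({\mathcal N}_{\sr})\right]^{r_k}$. Invoking \eqref{(cum1)} together with the hypothesis $\mbs_i=\mbs$ (so that $\omega_{\scriptscriptstyle d,\sr} \equiv p \punt \omega_{\scriptscriptstyle d}$ and the single-wave overall photocounters $N_{i,[1]}$ are i.i.d.), each factor collapses to $\hbox{\rm Cum}_k({\mathcal N}_{\sr}) = p \cdot \hbox{\rm Cum}_k(N_{[1]})$. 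Substituting and pulling out the scalar powers then gives the linearization $E[\kappa_\lambda(\tilde{\Nbs}^{[p]})] = p^{l(\lambda)} \, c_\lambda(N_{[1]}) = p^{l(\lambda)} \, E[\kappa_\lambda(\tilde{\Nbs}^{[1]})]$, a linear relation between the two polykays. For single-index polykays ($\lambda=(k)$, so $l(\lambda)=1$) this collapses to the exact polykay analogue of \eqref{(cum1)}, namely $E[\kappa_k(\tilde{\Nbs}^{[p]})] = \sum_{i=1}^p E[\kappa_k(\tilde{\Nbs}^{[1]}_i)]$, if one envisions distinct independent single-wave samples $\tilde{\Nbs}^{[1]}_i$ for each wave.

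I do not anticipate any substantive technical obstacle; the proof is essentially a composition of unbiasedness with cumulant additivity. The only points that need care are expository. First, the linearization is an identity in expectation and not a pointwise sample identity, because polykays are $U$-statistics nonlinear in the data and therefore do not decompose pointwise under ${\mathcal N}_{\sr} = \sum_{i=1}^p N_{i,[1]}$. Second, the clean scalar factor $p^{l(\lambda)}$ (equivalently, pulling a common $c_k(N_{[1]})$ out of each of the $p$ summands of \eqref{(cum1)}) relies on the i.i.d.\ hypothesis $\mbs_i=\mbs$; without it the additive form of \eqref{(cum1)} still propagates to polykays via unbiasedness, but a genuinely linear relation survives only in the single-index case $\lambda=(k)$.
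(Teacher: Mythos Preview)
Your approach is correct and matches the paper's intent: the paper does not give an explicit proof but simply states the proposition as the polykay analogue of the cumulant additivity \eqref{(cum1)}, relying implicitly on the unbiasedness $E[\kappa_\lambda(\xbs)]=c_\lambda$ recalled just above. Your chaining of unbiasedness with \eqref{(cum1)} is exactly the argument being invoked, and your caveats (the identity holds in expectation rather than pointwise, and the clean $p^{l(\lambda)}$ scalar requires the identically-distributed hypothesis $\mbs_i=\mbs$) are apposite clarifications that the paper leaves implicit.
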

Also equation \eqref{(addcum1)} may be useful as a first step to verify if the underlying stochastic model is of Poisson type.
Indeed, if photocounters have been sampled for a fixed vector of intensities (for example by using Monte-Carlo methods), then
single polykays conditioned to known intensities should result approximatively constant.  Moreover, Proposition \ref{rrr} shows
that polykays are useful to compute $\hbox{\rm Cum}_k \left( {\mathcal N}_{\sY} \right)$ and $\hbox{\rm FCum}_k \left( {\mathcal N}_{\sY} \right).$

As mentioned at the beginning of the section, a different task consists in estimating the intensity field parameters from sampled photocounters. According to Proposition \ref{(factmom)}, this estimation can be easily carried out using $U$-statistics for factorial moments. $U$-statistics of factorial moments can be recovered by expressing factorial moments $\{f_k\}$ in terms of power sums
$s_j$ of the sampled overall photocounters $\tilde{\Nbs}^{\sr}.$ Again the symbolic method of moments helps in finding their
expression, as the following proposition shows.
\begin{prop} The $U$-statistic ${\mathfrak f}_k$ for the $k$-th factorial moment of ${\mathcal N}_{\sr}$ is
$${\mathfrak f}_k = \frac{1}{n} \sum_{\lambda \vdash k} d_{\lambda} \, s_{l(\lambda)} \tilde{c}_{\lambda} \quad
\hbox{with} \,\,\, \tilde{c}_{\lambda} = \prod_{i \geq 1} \left[(-1)^{i-1} (i-1)! \right]^{r_i}.$$
\end{prop}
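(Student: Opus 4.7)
My plan is to verify that ${\mathfrak f}_k$ is unbiased for the $k$-th factorial moment of ${\mathcal N}_{\sr}$, which combined with the manifest symmetry of ${\mathfrak f}_k$ in the sample (it is a linear combination of power sums) exhibits it as a $U$-statistic. I would let $\nu$ be the umbra representing ${\mathcal N}_{\sr}$, with moments $a_j = {\mathbb E}[\nu^j] = E[{\mathcal N}_{\sr}^j]$. Then the $k$-th factorial moment is ${\mathbb E}[(\nu \punt \chi)^k]$ by the definition of the $\nu$-factorial umbra. Applying formula (\ref{dotoperation1}) with $\gamma = \nu$ and $\alpha = \chi$ yields
\[
E\bigl[({\mathcal N}_{\sr})_k\bigr] \;=\; \sum_{\lambda \vdash k} d_\lambda \, a_{l(\lambda)} \, c_\lambda(\chi),
\]
where $c_\lambda(\chi)$ is the product associated to $\lambda$ in the sense of Definition \ref{associated}, built from the cumulants of the singleton umbra.

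The key step is then identifying those cumulants. Since ${\mathbb E}[\chi^k] = \delta_{1,k}$, the formal moment generating function of $\chi$ is $1+z$, and by (\ref{(61)}) its cumulant generating function is $\log(1+z) = \sum_{i \geq 1} (-1)^{i-1}(i-1)!\,z^i/i!$. Consequently $c_i(\chi) = (-1)^{i-1}(i-1)!$, and taking the $\lambda$-indexed product gives $c_\lambda(\chi) = \prod_{i \geq 1} [(-1)^{i-1}(i-1)!]^{r_i} = \tilde c_\lambda$, precisely the coefficient appearing in the statement.

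To close the argument, I would use that $\tilde{\Nbs}^{\sr}$ consists of $n$ i.i.d.\ copies of ${\mathcal N}_{\sr}$, so $E[s_{l(\lambda)}] = n\, a_{l(\lambda)}$ by linearity, whence
\[
E[{\mathfrak f}_k] \;=\; \frac{1}{n}\sum_{\lambda \vdash k} d_\lambda \, \tilde c_\lambda \, E[s_{l(\lambda)}] \;=\; \sum_{\lambda \vdash k} d_\lambda \, a_{l(\lambda)} \, \tilde c_\lambda \;=\; E\bigl[({\mathcal N}_{\sr})_k\bigr].
\]
I do not anticipate a real obstacle: the plan reduces to the singleton-cumulant computation plus a direct application of (\ref{dotoperation1}) and linearity of expectation. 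Equivalently, one could first deduce the polynomial identity $(x)_k = \sum_{\lambda \vdash k} d_\lambda \tilde c_\lambda\, x^{l(\lambda)}$ from the same umbral computation and recognise ${\mathfrak f}_k$ as the standard degree-one $U$-statistic $\frac{1}{n}\sum_{i=1}^n (\tilde N^{[p]}_i)_k$ rewritten via power sums.
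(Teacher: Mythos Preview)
Your proposal is correct and follows essentially the same route as the paper: both apply formula~(\ref{dotoperation1}) to the factorial umbra $\nu \punt \chi$, identify the cumulants of the singleton umbra as $(-1)^{i-1}(i-1)!$, and then replace the moments $a_{l(\lambda)}$ by the sample estimates $s_{l(\lambda)}/n$. Your version is slightly more explicit---you derive the singleton cumulants from $\log(1+z)$ rather than citing them, and you spell out the unbiasedness check---but the argument is the same.
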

\begin{proof}
From (\ref{dotoperation1}), factorial moments of $\alpha$ are $E[(\alpha \punt \chi)^k] = \sum_{\lambda \vdash k} d_{\lambda} a_{l(\lambda)} \tilde{c}_{\lambda}$ with $\tilde{c}_{\lambda}$ the sequence of cumulants of $\chi,$ associated to the partition $\lambda.$ The result follows since $a_{k}$ can be estimated by sample moments $s_{k}/n$ and cumulants of the singleton umbra are
$\{(-1)^{k-1} (k-1)!\},$ see \cite{Dinardo0}.
\end{proof}

A completely different scenario arises when we wish to predict photocounting effect from sampled intensities by using
equation \eqref{(distrphoto)}. In this case, we need to estimate cumulants of Wishart random matrices. The reasons are
twofold. Factorial cumulants of overall photocounters are equal to cumulants of Wishart random matrices and linearize
on outers product of $\{\Xbs_i\},$ allowing to check the independence property. Moreover, since complete Bell polynomials are easily recovered from any symbolic packages, these estimators allow us to recover also an approximation of probability distribution
(\ref{(distrphoto)}). Let us underline that in the literature on photocounting, factorial cumulants are calculated from moments of the recorded photon counts by using the classical moment conversion equations.  Here we propose a different strategy relied on {\it spectral polykays} $\tilde{\kappa}_{\lambda} \left( A \right),$ introduced in \cite{Dinardo4}. Spectral polykays are unbiased estimators of cumulants of random matrices $A.$ For Wishart random matrices, spectral polykays still have
the inheritance property and estimate cumulants normalized to the dimension \cite{Dinardo1}:
\begin{equation}
E \left\{ \tilde{\kappa}_{\lambda} \left[ W_d(p) \right] \right\} = \frac{1}{d^{l(\lambda)}} \hbox{Cum}_{\lambda} \left[ \Tr \left( W_d(p) \right) \right].
\label{(specpol)}
\end{equation}
In difference from polykays, referring to random samples of a population, spectral polykays involve spectral samples. A spectral
sample is the eigenvalue vector ${\boldsymbol e}$ of a random matrix $A,$ if the size of the sampling $n$ is equal to the order
of the matrix $A$. If $n < d$ a suitable subsample of spectral decomposition is selected. In the following, we assume $n=d.$

Spectral polykays up to order $3$ are:
\begin{eqnarray}
\tilde{\kappa}_{1}({\boldsymbol e}) & = & \frac{\Tr \left[ W_d(p) \right]}{d}, \,\, \tilde{\kappa}_{1^2}({\boldsymbol e}) = \frac{d \,  (\Tr \left[ W_d(p) \right])^2 -
\Tr \left[ W_d(p)^2 \right]}{d \, (d^2-1)},  \\
\tilde{\kappa}_{2}({\boldsymbol e})  & = &  \frac{d \, (\Tr \left[ W_d(p) \right])^2 - \Tr \left[ W_d(p)^2 \right]}{d \,(d^2-1)} \nonumber \\
\tilde{\kappa}_{1^3}({\boldsymbol e}) & = & \frac{(\Tr \left[ W_d(p) \right])^3 (d^2 - 2) - 3 \, d \, \Tr \left[ W_d(p) \right] \Tr \left[ W_d(p)^2 \right]
+ 4 \Tr \left[ W_d(p)^3 \right]}{d \, (d^2-1) \, (d^2-4)}, \nonumber \\
\tilde{\kappa}_{1,2}({\boldsymbol e}) & = & \frac{- 2 \, d \, \Tr \left[ W_d(p)^3 \right] + (d^2 + 2) \, \Tr \left[ W_d(p) \right] \Tr \left[ W_d(p)^2 \right]
- d \, (\Tr \left[ W_d(p) \right])^3}{d \, (d^2-1) \, (d^2-4)}, \nonumber \\
\tilde{\kappa}_{3}({\boldsymbol e}) & = & 2 \, \frac{2 (\Tr \left[ W_d(p) \right])^3 - 3 \, d \, \Tr \left[ W_d(p) \right] \Tr \left[ W_d(p)^2 \right] + d^2 \Tr \left[ W_d(p)^3 \right]}{d \, (d^2 - 1)\, (d^2 - 4)}. \label{(poly1bis)}
\end{eqnarray}
For completeness, the general formula to recover spectral polykays in terms of intensities is provided in Theorem
\ref{spectr}. The proof is given in \cite{Dinardo4}. The algorithm implementing this formula is available in \cite{DinardoMaple}. Permutations\footnote{A permutation $\sigma$ of $[k]$ can be decomposed into disjoint cycles $C(\sigma).$ The length of the cycle $c \in C(\sigma)$ is its cardinality, denoted by ${\mathfrak l}(c).$ The number of cycles of $\sigma$ is denoted by $|C(\sigma)|.$ Recall that a permutation $\sigma$ with $r_1$ $1$-cycles, $2$-cycles and so on is said to be of cycle class $\lambda = (1^{r_1}, 2^{r_2}, \ldots) \vdash k.$ } of cycle structure $\lambda$ are involved.
\begin{thm} \label{spectr}
If $\lambda \vdash k,$ then
$ E \left\{ \tilde{\kappa}_{\lambda} \left[ W_d(p) \right] \right\}
= \prod_{j} (j!)^{r_j} \sum_{\tau \, \omega = \sigma} {\mathfrak{Tr}}(I_d)^{-1}(\tau) E \left\{ {\mathfrak{Tr}} \left[ W_d(p) \right](\omega) \right\},$ where ${\mathfrak{Tr}}(A)(\sigma) = \prod_{c \in C(\sigma)} \Tr[A^{{\mathfrak{l}(c)}}],$ with $A$ either
the matrix identity $I_d$ either the Wishart random matrix $W_d(p),$ and ${\mathfrak{Tr}}(A)^{-1} $ is the inverse function
\footnote{The inverse function $f^{-1}(\sigma)$ of $f(\sigma)$ satisfies  $\sum_{\tau \, \omega = \sigma} f(\tau) \, f^{-1}(\omega)
= \sum_{\tau \, \omega = \sigma} f^{-1}(\tau) \, f(\omega) = \delta(\sigma)$ where $\delta(\sigma) = 1$ if $\sigma$ is the permutation identity, $0$ otherwise.} of ${\mathfrak{Tr}}(A).$
\end{thm}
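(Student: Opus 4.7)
My plan is to derive the formula by combining the defining unbiasedness property $E\{\tilde{\kappa}_{\lambda}[W_d(p)]\} = d^{-l(\lambda)} \hbox{Cum}_{\lambda}(\Tr[W_d(p)])$ of the spectral polykay with a M\"obius-style inversion on the convolution algebra of cycle-class-symmetric functions on $S_k$. First I would expand $\tilde{\kappa}_{\lambda}(\mathbf{e})$ as a symmetric polynomial in the eigenvalues $\mathbf{e}$ of $W_d(p)$, written on the power-sum basis. For any permutation $\omega$ of cycle class $\mu$, the product of power sums $p_{\mu}(\mathbf{e}) = \prod_i \Tr[W_d(p)^{\mu_i}]$ coincides with ${\mathfrak{Tr}}[W_d(p)](\omega)$, and this identification provides the bridge between the eigenvalue calculus on the left-hand side and the permutation-indexed sum on the right-hand side.

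Next, taking expectations and applying the defining property reduces the task to expressing the normalised cumulant $d^{-l(\lambda)} \hbox{Cum}_{\lambda}(\Tr[W_d(p)])$ as a linear combination of the trace moments $E\{{\mathfrak{Tr}}[W_d(p)](\omega)\}$. The moment-cumulant inversion supplied by the symbolic method (equation \eqref{dotoperation1} with $\gamma$ replaced by $\chi$) expresses this inversion as a sum over set partitions; I would lift it to a convolution over $S_k$ by replacing each partition of $[k]$ with a permutation of the same cycle class. Since ${\mathfrak{Tr}}(I_d)(\sigma) = d^{|C(\sigma)|}$, its convolution inverse ${\mathfrak{Tr}}(I_d)^{-1}$ encodes exactly the dimensional normalisation demanded by the polykay's definition, while the prefactor $\prod_j (j!)^{r_j}$ records the number of internal orderings within each cycle that leave the associated power-sum product unchanged (i.e., the size of the stabiliser of the cycle-class decomposition).

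The main obstacle is this lift from the partition lattice to $S_k$: one must verify that cycle-class-symmetric functions form a subring under the convolution $(f*g)(\sigma) = \sum_{\tau\omega = \sigma} f(\tau)\, g(\omega)$, and that the multiplicities produced by factorisations $\tau\omega = \sigma$ reproduce exactly the M\"obius coefficients of set partition refinement. This requires a careful book-keeping argument tracking how cycle structures interact under composition in $S_k$, together with use of the \emph{uncorrelation property} of $\mathbb{E}$ to split expectations of $\mathfrak{Tr}$-products across disjoint cycles. Once this algebraic correspondence is in place, collecting terms and reading off the coefficient of a representative $\sigma$ of cycle class $\lambda$ yields the claimed identity.
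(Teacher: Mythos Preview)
The paper does not supply a proof of this theorem: immediately before the statement it says ``The proof is given in \cite{Dinardo4}'' and no argument is reproduced here. So there is nothing in the present paper to compare your proposal against.

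That said, a brief remark on your sketch. You take equation~\eqref{(specpol)} as the \emph{defining} property of the spectral polykay and then try to invert. In the framework of \cite{Dinardo4} the logic runs the other way: the spectral polykay $\tilde{\kappa}_{\lambda}$ is \emph{defined} as a specific symmetric function of the eigenvalues (equivalently, as the expression on the right-hand side of the theorem before taking expectations), and the unbiasedness \eqref{(specpol)} is then a \emph{consequence} for the Wishart ensemble. Your route therefore presupposes that the right-hand side of \eqref{(specpol)} pins down $\tilde{\kappa}_{\lambda}$ uniquely among trace-polynomials of the appropriate degree, which is an extra statement you would need to justify. The ``lift'' you flag as the main obstacle --- passing from M\"obius inversion on the partition lattice to convolution on $S_k$ and checking that the combinatorial multiplicities match --- is exactly the substantive content of the argument in \cite{Dinardo4}; your outline correctly identifies where the work lies but does not yet carry it out.
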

\section{Multivariate photocounting effect}
A detailed description of photocounting involves the computation of joint moments and joint cumulants
\begin{equation}
E \left(N_{1, \sr}^{k_1} \cdots N_{d, \sr}^{k_d} \right) = m^{\sr}_{\kbs} \qquad
\hbox{Cum}_{\kbs} \left(N_{1, \sr}, \ldots, N_{d, \sr} \right) = c^{\sr}_{\kbs}
\label{(mult)}
\end{equation}
with $\kbs = (k_1, \ldots, k_d) \in \mathbb{N}_0^d$ and $\Nbs^{\sr} = (N_{1, \sr}, \ldots ,N_{d, \sr}).$ In order to deal with sequences (\ref{(mult)}), the symbolic method of moments has
been generalized to multi-index $\kbs$ and vectors of umbral monomials \cite{Dinardo3}. Vectors of umbral
monomials correspond to correlated random vectors when their supports\footnote{The support of an umbral polynomial $p \in R[A]$ is the set of all umbrae which occur.} are not  disjoint.  Following the notations introduced in \cite{Dinardo3}, $f(\Nbs^{\sr}, \zbs)$ denotes the moment generating function\footnote{Note that in the literature, the probability generating function of $\Nbs^{\sr}$ is erroneously called moment generating function since factorial moments are usually recovered from it.} of $\Nbs^{\sr}.$ In the following, for brevity, we referred to  the number $p$ of superimposed waves only when necessary.
\subsection{Multivariate moment symbolic method}
Let $\{\nu_1, \ldots, \nu_d\} \in {\mathbb C}[{\mathcal A}]$ a set of umbral monomials
with support not necessarily disjoint. A complex sequence $\{a_{\kbs}\},$ with
$a_{\kbs} = a_{k_1  \ldots k_d}$ and $a_{\bf 0} = 1$, is represented by the $d$-tuple $\nubs=(\nu_1,\ldots,\nu_d)$ iff
\begin{equation}
{\mathbb E}[\nubs^{\kbs}] = a_{\kbs}, \qquad \kbs \in \mathbb{N}_0^d.
\label{(multmoments)}
\end{equation}
If $\{\nu_1, \ldots, \nu_d\}$ are umbral monomials with disjoint supports then $a_{\kbs}= {\mathbb E}[\nu_1^{k_1}] \cdots {\mathbb E}[\nu_d^{k_d}]$. The elements $a_{\kbs}$ in (\ref{(multmoments)}) are called {\it
multivariate moments} of $\nubs$ and, by analogy with random vectors,
$$f(\nubs,\zbs) = 1 + \sum_{i \geq 1} \sum_{|\kbs| = i} a_{\kbs} \frac{\zbs^{\kbs}}{\kbs!}$$
is the moment generating function of $\nubs,$ with $\zbs = (z_1,  \ldots, z_d), |\kbs|=k_1 +  \cdots + k_d$ and $\kbs!=k_1! \, \cdots k_d!.$
 Two umbral $d$-tuples $\nubs_1$ and $\nubs_2$ are said to be {\it uncorrelated} if and only if ${\mathbb E}[\nubs_1^{\kbs} \, \nubs_2^{\jbs}]= {\mathbb E}[\nubs_1^{\kbs}]{\mathbb E}[\nubs_2^{\jbs}]$ for all $\kbs, \jbs \in  \mathbb{N}_0^d.$
 They are said to be similar if ${\mathbb E}[\nubs_1^{\kbs}]={\mathbb E}[\nubs_2^{\kbs}]$ for all $\kbs \in \mathbb{N}_0^d,$ in symbols $\nubs_1 \equiv \nubs_2.$ As done for the univariate case, if the sequence $\{a_{\kbs}\}$ is umbrally represented by $\nubs,$ its
 sequence $\{c_{\kbs}\}$ of multivariate cumulants satisfies
\begin{equation}
\sum_{i \geq 1} \,  \sum_{|\kbs|=i} c_{\kbs} \frac{\zbs^{\kbs}}{\kbs!} = \log \left[  f(\nubs,\zbs) - 1 \right].
\label{(genfun2)}
\end{equation}
Next definition generalizes Definition \ref{associated} to multi-index partitions.
\begin{defn}\label{associated1}
{\rm If $\lambdabs$ is a multi-index partition \footnote{ A partition of a multi-index $\lambdabs \vdash \mbs$ is a matrix $\lambdabs = (\lambda_{j t})$ of non-negative integers and with no zero columns in lexicographic order such that $\lambda_{j1}+\lambda_{j2}+\cdots =m_j$ for
$j=1,2,\ldots,d.$ As for integer partitions, the notation $\lambdabs = (\lambdabs_{1}^{r_1}, \lambdabs_{2}^{r_2}, \ldots)$
means that in the matrix $\lambdabs$ there are $r_1$ columns equal to $\lambdabs_{1}$,
$r_2$ columns equal to $\lambdabs_{2}$ and so on, with $\lambdabs_{1} <
\lambdabs_{2} < \cdots$. The multiplicity of $\lambdabs_i$ is $r_i$ and we set $\ml(\lambdabs)=(r_1, r_2,\ldots).$ The number of columns
of $\lambdabs$ is denoted by $l(\lambdabs).$} and $\{a_{\kbs}\}$ (respectively  $\{c_{\kbs}\}$) is the sequence of multivariate moments (respectively multivariate cumulants) of $\nubs,$ the product $a_{\lambdabs} = a_{\lambdabs_1}^{r_1} a_{\lambdabs_2}^{r_2} \cdots$ (respectively $c_{\lambdabs} = c_{\lambdabs_1}^{r_1} c_{\lambdabs_2}^{r_2} \cdots$)
is said {\it associated to the partition} $\lambdabs.$}
\end{defn}
\noindent
For example a multi-index partition $\lambdabs$ of $\kbs=(2,1,5)$ is $\lambdabs = (\lambdabs_1, \lambdabs_2, \lambdabs_3)$ with
$\lambdabs_1' = (0,0,1), \lambdabs_2' = (1,0,1), \lambdabs_3' = (1,1,2)$ and $a_{\lambdabs}=a_{0 \,0 \, 1} \, a_{1 \, 0 \,1}\, a_{1 \, 1 \,2}.$
If in the dot-product $\gamma \punt \alpha,$ the umbra
$\alpha$ is replaced by the $d$-tuple $\nubs,$ then equation (\ref{dotoperation1}) generalizes in
\begin{equation}
{\mathbb E}[(\gamma \punt \nubs)^{\kbs}] =  \sum_{\lambdabs \mmodels \kbs}
\frac{\kbs!}{\ml(\lambdabs)! \lambdabs!} \, g_{l(\lambdabs)} \, c_{\lambdabs}, \label{(eq:15)}
\end{equation}
where $c_{\lambdabs}$ is the product of multivariate cumulants of $\nubs$ associated to $\lambdabs.$
\begin{defn}
{\rm The r.v.  with sequence of moments \eqref{(eq:15)} is named {\sl multivariate generalized random sum}.}
\end{defn}
Suitable choices of $\gamma$ and $\nubs$ correspond to suitable choices of $\{g_k\}$ and $\{c_{\kbs}\}$ in (\ref{(eq:15)}) and allow us to recover moments of special (auxiliary) umbrae. A special dot-product is the $\nubs$-cumulant umbra $\chi \punt \nubs,$ representing the sequence of cumulants $\{c_{\kbs}\}$ in \eqref{(eq:15)}.  The following definition states when a $d$-tuple of umbral monomials
may be replaced by a random vector.
\begin{defn}
{\rm A $d$-tuple $\nubs$ of umbral monomials  represents a random vector  $\Xbs,$ if $\nubs$ umbrally represents the sequence of
multivariate moments $\{E[\Xbs^{\kbs}]\},$ that is ${\mathbb E}[\nubs^{\kbs}]=E[\Xbs^{\kbs}]$ for all $\kbs \in \mathbb{N}_0^d.$}
\end{defn}
As in the univariate case, the dot-product $\gamma \punt \nubs$ represents  a sum of random vectors indexed by a not necessarily univariate
integer-value r.v, what we have called multivariate generalized random sum. In particular $\chi \punt \nubs$ represents the $\nubs$-cumulant
umbra.
\subsection{Computations of joint photocounters}
Complete Bell polynomials $Y_k(x_1, \ldots, x_{k})$ in \eqref{(distrphoto)} are
\begin{equation}
Y_k(x_1, \ldots, x_{k}) = \sum_{i=1}^k  \sum_{\lambda \vdash k, \, l(\lambda)=i} d_{\lambda} \,\, x_1^{r_1} x_2^{r_2} \cdots
\label{(bellpol)}
\end{equation}
Joint moments and joint cumulants (\ref{(mult)}) can be computed by using suitable generalizations of complete Bell polynomials
$Y_k(x_1, \ldots, x_{k})$ with the indeterminates $\{x_1, \ldots, x_{k}\}$ replaced by umbrae.  More precisely, let us consider the auxiliary umbra $\gamma \punt \beta \punt \alpha,$ that is the summation $\gamma$ times of $\beta \punt \alpha.$ The auxiliary umbra $\beta \punt \alpha$ represents
a compound Poisson r.v. of parameter $1,$ that is a summation $N$ times of a r.v. represented by $\alpha,$ with $N \simeq \hbox{Po}(1).$ Moments of $\gamma \punt \beta \punt \alpha,$ computed by means of equation (\ref{dotoperation1}), result to be a first generalization of complete Bell polynomials \eqref{(bellpol)}. Indeed, when $\alpha$ is replaced by $\beta \punt \alpha$ in equation (\ref{dotoperation1}), we have
\cite{Dinardo}
\begin{equation}
{\mathbb E}[(\gamma \punt \beta \punt \alpha)^k] =  \sum_{\lambda \vdash k} d_{\lambda} \, g_{l(\lambda)} \,a_{\lambda}
\label{(Bellpol1)}
\end{equation}
with $\{a_k\}$ and $\{g_k\}$ umbrally represented by the umbra $\alpha$ and $\gamma$ respectively. The generalization to the multivariate case of equation (\ref{(Bellpol1)}) is obtained by replacing $\alpha$ with the $d$-tuple $\nubs=(\nu_1,\ldots,\nu_d)$
\begin{equation}
{\mathbb E}[ (\gamma \punt \beta \punt \nubs)^{\kbs}] =  \sum_{\lambdabs \mmodels \kbs} \,
\frac{\kbs!}{\ml(\lambdabs)! \lambdabs!} \, g_{l(\lambdabs)} \, a_{\lambdabs}, \label{(eq:16)}
\end{equation}
where $a_{\lambdabs}$ is the product of multivariate moments of $\nubs$ associated to $\lambdabs.$ More general expressions
of equation (\ref{(eq:16)}) correspond to moments of
$\gamma_1 \punt \beta \punt \nubs_1 + \cdots + \gamma_d \punt \beta \punt \nubs_d$ with  $\{\nubs_1, \ldots, \nubs_d\}$ $d$-tuples of umbral monomials \cite{Dinardo3}. Set
\begin{equation}
(\gamma_1 \punt \beta \punt \nubs_1 + \cdots + \gamma_d \punt \beta \punt \nubs_d)^{\kbs} = {\mathcal Y}_{\kbs}^{(\nubs_1, \ldots, \nubs_d)}(\gamma_1, \ldots, \gamma_{d}).
\label{(multbern)}
\end{equation}
The polynomials ${\mathcal Y}_{\kbs}^{(\nubs_1, \ldots, \nubs_d)}(\gamma_1, \ldots, \gamma_{d})$ are said {\sl generalized complete Bell polynomials}. For $d=1$ we recover ${\mathcal Y}_{\kbs}^{(\nubs)}(\gamma) = (\gamma \punt \beta \punt \nubs)^{\kbs}.$ By using a suitable generalization of multinomial expansion \cite{Dinardo3}, the $\kbs$-th moment of $(\gamma_1 \punt \beta \punt \nubs_1 + \cdots + \gamma_d \punt \beta \punt \nubs_d)$ is
\begin{equation}
{\mathbb E}\left[{\mathcal Y}_{\kbs}^{(\nubs_1, \ldots, \nubs_d)}(\gamma_1, \ldots, \gamma_{d}) \right] = \sum_{(\ibs_1, \ldots, \ibs_d): \sum_{j=1}^d \ibs_j = \kbs}  \binom{\kbs}{\ibs_1, \ldots, \ibs_d} {\mathbb E} \left[ (\gamma_1 \punt \beta \punt \nubs_1)^{\ibs_1} \cdots (\gamma_d \punt \beta \punt \nubs_d)^{\ibs_d} \right].
\label{(expansion)}
\end{equation}
In the following, let us denote by $\iota_{\scriptscriptstyle j}$ the umbra representing the intensity $I_{\scriptscriptstyle j}$ for $j=1,\ldots,d$ and by $\iotabs$ the corresponding $d$-tuple $\iotabs \equiv (\iota_1, \ldots, \iota_d).$
\begin{thm} \label{teorema1} If $\kbs \in \mathbb{N}_0^d,$ then
$m_{\kbs}  = {\mathbb E}\left[{\mathcal Y}_{\kbs}^{(\ubs_{\scriptscriptstyle 1}, \ldots, \ubs_{\scriptscriptstyle d})} \left(\iota_{\scriptscriptstyle 1}, \ldots, \iota_{\scriptscriptstyle d} \right) \right]$
with
\begin{equation}
\ubs_j=( \varepsilon, \ldots, \!\!\! \underbrace{u}_{j\hbox{\tiny{-th place}}} \! ,\!\ldots, \varepsilon), \quad
\hbox{\rm for $j=1,\ldots,d.$}
\label{(uvector)}
\end{equation}
\end{thm}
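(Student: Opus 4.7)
The plan is to show that the umbral $d$-tuple
$\iota_{\scriptscriptstyle 1} \punt \beta \punt \ubs_{\scriptscriptstyle 1} + \cdots + \iota_{\scriptscriptstyle d} \punt \beta \punt \ubs_{\scriptscriptstyle d}$
represents the photocounter random vector $\Nbs^{\sr}$; the conclusion then follows at once from \eqref{(multbern)}. The identification proceeds summand by summand. Since ${\mathbb E}[\ubs_j^{\ibs}]$ equals $1$ when $\ibs$ is supported on the $j$-th coordinate and $0$ otherwise---the augmentation umbrae in the remaining slots annihilate every positive power there---$\ubs_j$ umbrally represents the deterministic vector $e_j$. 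Hence $\beta \punt \ubs_j$ represents a $d$-tuple whose $j$-th coordinate is Po($1$) and whose remaining coordinates vanish identically; applying the outer $\iota_{\scriptscriptstyle j} \punt$ converts the Poisson parameter from $1$ to the random intensity $\iota_{\scriptscriptstyle j}$, just as in the univariate mixed Poisson $\omega_{\scriptscriptstyle d,\sr} \punt \beta$ of Section 4.

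Summing over $j$ then assembles $\Nbs^{\sr}$. The distinct copies of the Bell umbra $\beta$ appearing in different summands (a consequence of the left distributive property already used in \eqref{(4.4)}) encode the conditional independence of $N_{1,\sr}, \ldots, N_{d,\sr}$ given $\Ibs_{\sr}$, while the fact that $\iota_{\scriptscriptstyle 1}, \ldots, \iota_{\scriptscriptstyle d}$ are components of the same $d$-tuple $\iotabs$ preserves the joint law of the intensities. This is exactly the structure prescribed by the multivariate Poisson-Mandel transform \eqref{(2)}, so $m_{\kbs}$ is indeed the $\kbs$-th multivariate moment of the umbral sum.

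For an independent check one can expand the right-hand side directly via \eqref{(expansion)}: the vanishing of $\varepsilon$ in every slot other than the $j$-th forces $\ibs_j = k_j e_j$ as the only surviving decomposition, with multinomial coefficient $1$. Each surviving factor ${\mathbb E}[(\iota_{\scriptscriptstyle j} \punt \beta \punt \ubs_j)^{k_j e_j}]$ collapses in coordinate $j$ to the $k_j$-th moment of a mixed Poisson of random parameter $\iota_{\scriptscriptstyle j}$, and the conditional evaluation of \cite{Oliva} reassembles the product into $E\{\prod_j E[N_{j,\sr}^{k_j} \mid I_{j,\sr}]\} = m_{\kbs}$ by the tower property. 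The principal difficulty is the bookkeeping: one must track which umbral copies are shared (within $\iotabs$) and which are distinct (the $\beta$'s across summands), so that both the conditional independence of the Poisson counts and the correlation of the intensities are encoded faithfully. Once this is settled, the computation is routine given \eqref{(eq:15)}--\eqref{(eq:16)}.
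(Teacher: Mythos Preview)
Your argument is correct, but it takes a different route from the paper's proof. The paper works entirely through moment generating functions: from \eqref{(uvector)} one has $f(\ubs_j,\zbs)=e^{z_j}$, and then the conditional Poisson structure of $\Nbs$ gives
\[
f(\Nbs,\zbs)=E\Bigl\{\prod_{j=1}^d E\bigl[e^{N_j z_j}\mid I_j\bigr]\Bigr\}
={\mathbb E}\Bigl(\exp\Bigl[\textstyle\sum_{j=1}^d \iota_j\{f(\ubs_j,\zbs)-1\}\Bigr]\Bigr),
\]
which is recognised at once as the moment generating function of $\iota_{\scriptscriptstyle 1}\punt\beta\punt\ubs_{\scriptscriptstyle 1}+\cdots+\iota_{\scriptscriptstyle d}\punt\beta\punt\ubs_{\scriptscriptstyle d}$. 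No summand-by-summand identification or bookkeeping of shared versus distinct umbrae is needed.

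Your approach is more structural: you identify each $\iota_j\punt\beta\punt\ubs_j$ as the $j$-th mixed Poisson coordinate embedded in ${\mathbb R}^d$, and then argue that the sum reproduces the joint law of $\Nbs$. This is sound and arguably more illuminating about why the construction works, but the discussion of which umbral copies are shared and which are distinct is delicate, and the MGF route sidesteps it entirely. Your ``independent check'' via \eqref{(expansion)} is essentially the content of the paper's subsequent Corollary~\ref{6.4}, so you have in effect folded that corollary's proof into the theorem. Either argument is valid; the paper's is shorter, yours explains more of the umbral mechanics.
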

\begin{proof}
From \eqref{(uvector)}, we have $f(\ubs_{\scriptscriptstyle j}, \zbs) = e^{z_j}$ for $j=1,2,\ldots,d$ and
\begin{equation}
f(\Nbs, \zbs) =   E \left\{ \prod_{j=1}^d E \bigl[ \exp \left( N_{{\scriptscriptstyle j}} \, z_j \right) \, | \, I_{{\scriptscriptstyle j}} \bigr] \right\} = {\mathbb E} \left( \exp \left[ \sum_{j=1}^d  \iota_{{\scriptscriptstyle j}} \left\{f(\ubs_{\scriptscriptstyle i}, \zbs) - 1\right\}  \right] \right).
\label{(multgf)}
\end{equation}
The result follows by observing that the right hand side of (\ref{(multgf)}) is the moment generating function of $\iota_{\scriptscriptstyle 1} \, \punt \beta \punt \ubs_{\scriptscriptstyle 1} + \cdots + \iota_{\scriptscriptstyle d} \, \punt \beta \punt \ubs_d.$
\end{proof}
Next corollary gives the explicit expression of joint moments of photocounters.
\begin{cor} \label{6.4} If $\kbs = (k_1, \ldots, k_d) \in \mathbb{N}_0^d$ then $m_{\kbs} = E \left\{ \prod_{j=1}^d \left[ \sum_{i=1}^{k_j} S(k_j,i) \, I_{j}^i \right] \right\}.$
\end{cor}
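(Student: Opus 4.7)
The plan is to unfold Theorem \ref{teorema1} by exploiting the augmentation-umbra absorption embedded in the definition \eqref{(uvector)} of $\ubs_j$, collapsing the $d$-fold symbolic Bell polynomial to a pure diagonal product, and then converting back to ordinary probabilistic language via the Stirling identity for Poisson moments.

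I would first apply the multinomial expansion \eqref{(expansion)} to ${\mathcal Y}_{\kbs}^{(\ubs_1,\ldots,\ubs_d)}(\iota_1,\ldots,\iota_d) = (\iota_1\punt\beta\punt\ubs_1+\cdots+\iota_d\punt\beta\punt\ubs_d)^{\kbs}$, writing $m_{\kbs}$ as a sum over decompositions $(\ibs_1,\ldots,\ibs_d)$ with $\sum_j \ibs_j=\kbs$ of the expectations ${\mathbb E}\!\left[\prod_j (\iota_j\punt\beta\punt\ubs_j)^{\ibs_j}\right]$. Because $\ubs_j$ is by \eqref{(uvector)} the $d$-tuple whose $m$-th entry is $\varepsilon$ for $m\neq j$ and $u$ for $m=j$, the $d$-tuple $\iota_j\punt\beta\punt\ubs_j$ has $m$-th component $\iota_j\punt\beta\punt\varepsilon\equiv\varepsilon$ for $m\neq j$ and $\iota_j\punt\beta\punt u\equiv \iota_j\punt\beta$ for $m=j$.

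Next, since the augmentation umbra kills every positive-order moment, raising to the multi-power $\ibs_j=(i_{j,1},\ldots,i_{j,d})$ annihilates the summand unless $i_{j,m}=0$ for all $m\neq j$. The balance condition $\sum_j \ibs_j=\kbs$ then forces $\ibs_j = k_j\,\mathbf{e}_j$, the sole survivor, with multinomial coefficient equal to $1$. Hence
$$m_{\kbs} \;=\; {\mathbb E}\!\left[\prod_{j=1}^d (\iota_j\punt\beta)^{k_j}\right].$$
Finally, conditioning on $\iotabs=(\iota_1,\ldots,\iota_d)$ (the fresh $\beta$-umbrae generated by each $\punt$ are mutually uncorrelated and uncorrelated with $\iotabs$, so the conditional evaluation factorizes), each factor becomes the $k_j$-th conditioned moment of the Poisson umbra $\iota_j\punt\beta$ with random parameter $\iota_j$. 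By \eqref{dotoperation1} applied to $\iota_j\punt\beta$, or equivalently by the classical Stirling expansion $x^{k}=\sum_{i=1}^{k} S(k,i)(x)_i$ combined with the fact that conditioned factorial moments of a Poisson r.v.\ are powers of the parameter, we have ${\mathbb E}[(\iota_j\punt\beta)^{k_j}\mid \iota_j]=\sum_{i=1}^{k_j} S(k_j,i)\,\iota_j^{\,i}$. Taking the outer expectation and recalling that $\iotabs$ umbrally represents $\Ibs$ yields the claimed identity.

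The only delicate step is the separation of the two uncorrelation layers: the $\iota_j$'s are generally correlated (they are diagonal entries of a Wishart matrix), but the Poisson-Bell umbrae $\beta$ introduced by each dot-product are freshly independent copies and also uncorrelated with $\iotabs$. This is precisely what makes the conditional evaluation on $\iotabs$ factorize only the Poisson layer while keeping the outer $E[\,\cdot\,]$ over the (possibly correlated) intensity vector intact, producing the product-of-Touchard-polynomials form in the statement.
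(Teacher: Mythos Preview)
Your proof is correct and follows essentially the same route as the paper's: apply the multinomial expansion \eqref{(expansion)}, use the augmentation structure of $\ubs_j$ to kill every summand except the diagonal one $\ibs_j=k_j\mathbf{e}_j$, then condition on $\iotabs$ and invoke the Stirling identity ${\mathbb E}[(\iota_j\punt\beta)^{k_j}\mid\iota_j]=\sum_{i} S(k_j,i)\iota_j^{\,i}$ before taking the outer expectation. The only cosmetic difference is that you argue componentwise via $\iota_j\punt\beta\punt\varepsilon\equiv\varepsilon$ and $\iota_j\punt\beta\punt u\equiv\iota_j\punt\beta$, whereas the paper phrases the same vanishing through the multivariate moment condition \eqref{(multmom)}.
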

\begin{proof}
Let us consider the expansion \eqref{(expansion)} with $(\nubs_{\scriptscriptstyle 1}, \ldots, \nubs_{\scriptscriptstyle d})$ replaced by $(\ubs_{\scriptscriptstyle 1}, \ldots, \ubs_{\scriptscriptstyle d})$ and $(\gamma_{\scriptscriptstyle 1}, \ldots, \gamma_{\scriptscriptstyle d})$ replaced by $(\iota_{\scriptscriptstyle 1}, \ldots, \iota_{\scriptscriptstyle d}).$
Since
\begin{equation}
{\mathbb E} \left[\ubs_{i}^{\jbs} \right] = \left\{ \begin{array}{cl}
0, & \hbox{if $\jbs$ is such that} \,\, j_{k} \ne 0 \,\, \hbox{for} \, k=1,2,\ldots,i-1,i+1, \ldots d, \\
1, & \hbox{otherwise}
\end{array} \right.
\label{(multmom)}
\end{equation}
among the vectors $(\ibs_1, \ldots, \ibs_d)$ giving not zero contributions in (\ref{(expansion)}), there are those satisfying
$\left(\ibs_{j}\right)_{i}=k_j \delta_{j,i}$ for $j=1,2,\ldots,d.$ In this case $\binom{\kbs}{\ibs_1, \ldots, \ibs_d}=1.$
Since
\begin{equation}
{\mathbb E} \left[ (\iota_{\scriptscriptstyle 1} \, \punt \beta \punt \ubs_{\scriptscriptstyle 1})^{\ibs_{\scriptscriptstyle 1}} \cdots (\iota_{{\scriptscriptstyle d}} \, \punt \beta \punt \ubs_{\scriptscriptstyle d})^{\ibs_{\scriptscriptstyle d}} \right] = {\mathbb E} \left\{ {\mathbb E} \left[ (\iota_{{\scriptscriptstyle 1}} \, \punt \beta \punt \ubs_{\scriptscriptstyle 1})^{\ibs_{\scriptscriptstyle 1}} \cdots (\iota_{{\scriptscriptstyle d}} \, \punt \beta \punt \ubs_{\scriptscriptstyle d})^{\ibs_{\scriptscriptstyle d}} \, | \,
\iota_{\scriptscriptstyle 1}, \ldots, \iota_{\scriptscriptstyle d}  \right] \right\}
\label{(eq:18)}
\end{equation}
the result follows by multiplying ${\mathbb E}[(\iota_{\scriptscriptstyle j} \, \punt \beta \punt \ubs_{\scriptscriptstyle j})^{\ibs_{\scriptscriptstyle j}} \, | \, \iota_{{\scriptscriptstyle j}}] = {\mathbb E}[(\iota_{{\scriptscriptstyle j}} \,
\punt \beta)^{k_j} \, | \, \iota_{{\scriptscriptstyle j}}] = \sum_{t=1}^{k_j} S(k_j,t) \, \iota_{{\scriptscriptstyle j}}^{t}$ for $j=1,2,\ldots,d$ and getting the overall expectation of the product.
\end{proof}
\begin{cor} \label{momzero} For $\kbs = (k_1, \ldots, k_d) \in \mathbb{N}_0^d,$ if there exists $j \in \{1,2,\ldots,d\}$ such that $k_j=0$ then $m_{\kbs}=0.$
\end{cor}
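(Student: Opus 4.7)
The plan is to obtain the result as an immediate consequence of Corollary~\ref{6.4}. That corollary provides the explicit representation
$$m_{\kbs} = E\left\{\prod_{j=1}^d \left[\sum_{i=1}^{k_j} S(k_j,i)\, I_j^i\right]\right\}.$$
Suppose $k_j = 0$ for some index $j \in \{1, \ldots, d\}$. For this $j$, the inner summation becomes $\sum_{i=1}^{0} S(0,i)\, I_j^i$, a sum over an empty range (the upper limit is strictly below the lower limit) and hence equal to zero under the standard convention. The resulting product of $d$ factors inside the expectation therefore contains a zero factor and vanishes identically, so its expectation is zero.

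There is essentially no obstacle in carrying this argument out; the whole proof is a one-step bookkeeping observation exploiting the empty-sum convention already built into the formula of Corollary~\ref{6.4}. It is consistent with the derivation of that corollary, where the conditional Stirling identity $\mathbb{E}[(\iota_{\scriptscriptstyle j} \punt \beta)^{k_j} \mid \iota_{\scriptscriptstyle j}] = \sum_{t=1}^{k_j} S(k_j,t)\, \iota_{\scriptscriptstyle j}^{t}$ was applied, an identity that implicitly presupposes a strictly positive exponent $k_j$. If one wished to give an independent, umbral proof, one would return to the expansion \eqref{(expansion)} with $(\nubs_1,\ldots,\nubs_d)=(\ubs_1,\ldots,\ubs_d)$ and $(\gamma_1,\ldots,\gamma_d)=(\iota_1,\ldots,\iota_d)$ and track, via \eqref{(multmom)}, the single surviving configuration $\ibs_l = k_l\,e_l$; the contribution of the $j$-th factor conditioned on $\iota_{\scriptscriptstyle j}$ then reduces to the same empty Stirling sum, so it contributes zero.

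The structural role of the corollary is worth highlighting: whenever $m_{\kbs}$ enters as a coefficient in series expansions — for instance in the multivariate analogue of the Poisson–Mandel expansion \eqref{(distrphoto)} — only multi-indices $\kbs$ with all components strictly positive contribute non-trivially, which will allow the summations there to be restricted to $\mathbb{N}^{d}$ rather than $\mathbb{N}_0^{d}$ without loss.
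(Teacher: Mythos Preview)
Your proposal is correct and matches the paper's intended approach: the corollary is stated without proof in the paper, immediately after Corollary~\ref{6.4}, precisely because it is meant to follow from that formula via the empty-sum observation you describe. Your additional remarks about the umbral derivation and the structural role of the result are accurate but go beyond what the paper records.
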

\begin{rem} \label{jointphotocounter}
{\rm  To compute $m_{\kbs}$ by using Theorem \ref{teorema1}, a symbolic procedure is available on demand.
This consists in expanding the product in the right hand side of \eqref{(expansion)} and then replacing occurrences of $\iota_{{\scriptscriptstyle 1}}^{k_1} \cdots \iota_{{\scriptscriptstyle d}}^{k_d}$ with $E\left[I_{1}^{k_1} \cdots I_{d}^{k_d}\right].$
The algorithm {\tt nCWishart} \cite{Dinardo5} allows us to compute $E\left[I_{1}^{k_1} \cdots I_{d}^{k_d}\right].$
Indeed, this algorithm allows us to compute general joint moments
\begin{equation}
E\left\{ \Tr \left[ W_d(p) \, H_1 \right]^{k_1} \cdots \Tr \left[ W_d(p) \, H_d \right]^{k_d} \right\}
\qquad \hbox{with} \quad H_1, \ldots, H_d \in {\mathbb C}^{d \times d}.
\label{(joint)}
\end{equation}
Joint moments $E\left[I_{1}^{k_1} \cdots I_{d}^{k_d}\right]$ can be recovered from (\ref{(joint)}) by choosing
$$(H_i)_{s\,t} = \left\{ \begin{array}{cl}
1, & \hbox{if $i=s=t,$} \\
0, & \hbox{otherwise,}
\end{array} \right. \qquad s,t=1,2,\ldots, d, \,\, \hbox{and} \,\, i=1,2,\ldots,d.$$}
\end{rem}
Denote by ${\mathfrak f}_{\kbs}$ the $\kbs$-th multivariate factorial moment of $\Nbs.$ As the following theorem states,
also multivariate factorial moments can be expressed via generalized complete Bell polynomials.

\begin{thm} \label{teorema2} If $\kbs \in \mathbb{N}_0^d,$ then
${\mathfrak f}_{\kbs}  = {\mathbb E}\left[{\mathcal Y}_{\kbs}^{(\chibs_1, \ldots, \chibs_d)} \left(\iota_{\scriptscriptstyle 1}, \ldots, \iota_{\scriptscriptstyle d} \right) \right]$
with
$$\chibs_j=( \varepsilon, \ldots, \!\!\! \underbrace{\chi}_{j\hbox{\tiny{-th place}}} \! ,\!\ldots, \varepsilon), \quad
\hbox{\rm for $j=1,\ldots,d.$}$$
\end{thm}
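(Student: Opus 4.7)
The plan is to mirror the proof of Theorem \ref{teorema1}, the only change being that $\ubs_j$ is replaced by $\chibs_j$ so that the role of the moment generating function $e^{z_j}$ is played by $1+z_j$, i.e., the single-variable factorial-moment generating function. The underlying reason is the umbral identity $f(\chi,z)=1+z$, which turns ``moments'' into ``factorial moments'' in exactly the same way that $f(u,z)=e^z$ produces ordinary moments.

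First, I would write down the joint factorial moment generating function of $\Nbs.$ Since $\sum_{\kbs}{\mathfrak f}_{\kbs}\,\zbs^{\kbs}/\kbs! = E[\prod_j (1+z_j)^{N_j}]$ and each $N_j$ is conditionally Poisson with random parameter $I_j,$ conditioning on $\Ibs$ gives
\begin{equation*}
E\!\left[\prod_{j=1}^d (1+z_j)^{N_j}\right] \;=\; E\!\left[\prod_{j=1}^d E\bigl((1+z_j)^{N_j}\,\big|\,I_j\bigr)\right] \;=\; {\mathbb E}\!\left[\exp\!\left(\sum_{j=1}^d \iota_{\scriptscriptstyle j}\,z_j\right)\right],
\end{equation*}
since for Poisson one has $E[(1+z_j)^{N_j}\mid I_j]=\exp(I_j z_j)$. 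Here I am using that $\iotabs=(\iota_{\scriptscriptstyle 1},\ldots,\iota_{\scriptscriptstyle d})$ umbrally represents the vector of intensities.

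Next I would match this against the umbral side. By construction of $\chibs_j,$ the uncorrelation/disjoint-support rule together with ${\mathbb E}[\chi^k]=\delta_{1,k}$ and ${\mathbb E}[\varepsilon^k]=\delta_{0,k}$ yields $f(\chibs_j,\zbs)=1+z_j,$ so $f(\chibs_j,\zbs)-1=z_j.$ The multivariate analogue of \eqref{(multgf)}, applied to $\iota_{\scriptscriptstyle 1}\punt\beta\punt\chibs_1+\cdots+\iota_{\scriptscriptstyle d}\punt\beta\punt\chibs_d,$ therefore gives exactly ${\mathbb E}[\exp(\sum_j\iota_{\scriptscriptstyle j}z_j)].$ Hence the mgf of that symbolic sum coincides with the joint factorial moment generating function of $\Nbs.$

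Finally, I would extract the coefficient of $\zbs^{\kbs}/\kbs!$ on both sides. On the one side this coefficient is ${\mathfrak f}_{\kbs};$ on the other it is ${\mathbb E}[(\iota_{\scriptscriptstyle 1}\punt\beta\punt\chibs_1+\cdots+\iota_{\scriptscriptstyle d}\punt\beta\punt\chibs_d)^{\kbs}],$ which by definition \eqref{(multbern)} equals ${\mathbb E}[{\mathcal Y}_{\kbs}^{(\chibs_1,\ldots,\chibs_d)}(\iota_{\scriptscriptstyle 1},\ldots,\iota_{\scriptscriptstyle d})].$ This yields the stated formula. There is no real obstacle once one recognizes that switching the auxiliary umbra from $u$ to $\chi$ inside each coordinate block converts the exponential generating function of joint moments into that of joint factorial moments; the only point requiring care is the identification $f(\chibs_j,\zbs)=1+z_j,$ which follows directly from the definitions of $\chi$ and $\varepsilon.$
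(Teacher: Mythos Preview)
Your proposal is correct and follows essentially the same route as the paper: both compute the factorial moment generating function of $\Nbs$ by conditioning on the intensities (the paper obtains it via the substitution $z_i=\log(1+w_i)$ in the ordinary mgf, you write $E[\prod_j(1+z_j)^{N_j}]$ directly), identify it with ${\mathbb E}[\exp(\sum_j\iota_{\scriptscriptstyle j}z_j)]$, and then match this against the mgf of $\iota_{\scriptscriptstyle 1}\punt\beta\punt\chibs_1+\cdots+\iota_{\scriptscriptstyle d}\punt\beta\punt\chibs_d$ using $f(\chibs_j,\zbs)=1+z_j$.
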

\begin{proof}
By using the same arguments employed in the proof of Theorem \ref{teorema1}, the moment generating function of the factorial moments of $\Nbs$ is
\begin{equation}
f(\Nbs, \zbs) \bigl|_{z_i=\log(1+w_i) \atop i=1,2,\ldots,d}  =  E \left\{ \prod_{j=1}^d E \bigl[ \exp \left( I_{{\scriptscriptstyle j}} \, w_j \right) \, | \, I_{{\scriptscriptstyle j}} \bigr] \right\} = {\mathbb E} \left( \exp \left[ \sum_{j=1}^d   \iota_{\scriptscriptstyle j} \left\{f(\chibs_i, \wbs) - 1\right\}  \right] \right)
\label{mmm}
\end{equation}
where $f(\chibs_i, \wbs) = 1 + w_i$ and $\wbs=(w_1, \ldots, w_d).$ The result follows by observing that the right hand side of (\ref{mmm}) is the moment generating function of $\iota_{\scriptscriptstyle 1} \, \punt \beta \punt \chibs_1 + \cdots + \iota_{\scriptscriptstyle d} \, \punt \beta \punt \chibs_d.$
\end{proof}
\begin{cor} \label{chibf} If $\kbs = (k_1, \ldots, k_d) \in \mathbb{N}_0^d,$ then ${\mathfrak f}_{\kbs} =  E \left( I_{1}^{k_1} \cdots I_{d}^{k_d} \right).$
\end{cor}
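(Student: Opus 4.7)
The plan is to mirror the strategy used in Corollary \ref{6.4}, replacing $\ubs_j$ by $\chibs_j$ and tracking how the change from $u$ to $\chi$ in the $j$-th slot of each $d$-tuple affects the multinomial expansion. First I would invoke Theorem \ref{teorema2} to write ${\mathfrak f}_{\kbs}={\mathbb E}[{\mathcal Y}_{\kbs}^{(\chibs_1,\ldots,\chibs_d)}(\iota_1,\ldots,\iota_d)]$ and then expand via \eqref{(expansion)} with $\nubs_j=\chibs_j$ and $\gamma_j=\iota_j$, producing a sum over tuples $(\ibs_1,\ldots,\ibs_d)$ with $\sum_j\ibs_j=\kbs$.

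Next I would record the analogue of \eqref{(multmom)} for $\chibs_j$: because the components of $\chibs_j$ have disjoint supports and ${\mathbb E}[\chi^n]=\delta_{1,n}$ while ${\mathbb E}[\varepsilon^n]=\delta_{0,n}$, the only multi-index $\jbs$ with ${\mathbb E}[\chibs_j^{\jbs}]\neq 0$ is $\jbs=e_j$, the $j$-th standard basis vector, and the corresponding moment equals $1$. Consequently $f(\chibs_j,\wbs)=1+w_j$, so conditional on $\iotabs=(\iota_1,\ldots,\iota_d)$ the auxiliary umbra $\iota_j\punt\beta\punt\chibs_j$ has moment generating function $\exp(\iota_j w_j)$; equivalently, as in \eqref{(eq:15)} applied with $\gamma=\iota_j$ and $\alpha$ replaced by $\beta\punt\chibs_j$, only the partition corresponding to $\ibs_j=(\ibs_j)_j\,e_j$ contributes to a given $\ibs_j$-moment, giving ${\mathbb E}[(\iota_j\punt\beta\punt\chibs_j)^{\ibs_j}\mid\iotabs]=\iota_j^{(\ibs_j)_j}$ when $(\ibs_j)_i=0$ for $i\neq j$, and $0$ otherwise.

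Inserting these conditional moments into the analogue of \eqref{(eq:18)} kills every term of the expansion except the unique tuple $(\ibs_1,\ldots,\ibs_d)$ with $\ibs_j=k_j e_j$ for each $j$; for this tuple the multinomial coefficient $\binom{\kbs}{k_1 e_1,\ldots,k_d e_d}$ equals $1$, and the conditional expectation reduces to $\iota_1^{k_1}\cdots\iota_d^{k_d}$. Taking the outer expectation and replacing umbral monomials by their matching moments yields $E(I_1^{k_1}\cdots I_d^{k_d})$, which is the desired identity.

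The calculation is essentially a bookkeeping exercise; the only delicate step is confirming that the ``bivariate'' evaluation of $\iota_j\punt\beta\punt\chibs_j$ really collapses to a degenerate vector $\iota_j e_j$ conditional on $\iotabs$. I would expect this to be the main point to justify carefully, but it follows transparently either from the generating-function identity $f(\chibs_j,\wbs)=1+w_j$ used in the proof of Theorem \ref{teorema2}, or, combinatorially, from the fact that cumulants of $\chibs_j$ at any multi-index other than $e_j$ vanish so that \eqref{(eq:15)} contains a single surviving summand.
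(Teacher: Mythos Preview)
Your proof is correct and follows essentially the same route as the paper: invoke Theorem~\ref{teorema2}, apply the multinomial expansion \eqref{(expansion)}, and argue that the only surviving tuple is $(\ibs_1,\ldots,\ibs_d)=(k_1e_1,\ldots,k_de_d)$, whose contribution is $E(I_1^{k_1}\cdots I_d^{k_d})$. The sole cosmetic difference is that the paper first uses the similarity $\beta\punt\chibs_j\equiv\ubs_j$ to rewrite the summands as $\iota_j\punt\ubs_j$ before expanding, while you work directly with $\iota_j\punt\beta\punt\chibs_j$ and collapse terms via $f(\chibs_j,\wbs)=1+w_j$; the two computations are equivalent.
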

\begin{proof}
From Theorem \ref{teorema2} and equation \eqref{(multbern)} we have ${\mathfrak f}_{\kbs} = {\mathbb E} \left[ \left( \iota_{\scriptscriptstyle 1} \punt \ubs_1 + \cdots + \iota_{\scriptscriptstyle d} \punt \ubs_d \right)^{\kbs} \right],$
since $\beta \punt \chibs_j \equiv \ubs_j$ for $j=1,\ldots,d.$ Then by using the multinomial expansion
\eqref{(expansion)},
we have
\begin{equation}
{\mathbb E} \left[ \left( \iota_{\scriptscriptstyle 1} \punt \ubs_1 + \cdots + \iota_{\scriptscriptstyle d} \punt \ubs_d \right)^{\kbs} \right] = \sum_{(\ibs_1, \ldots, \ibs_d): \sum_{j=1}^d \ibs_j = \kbs}  \binom{\kbs}{\ibs_1, \ldots, \ibs_d} {\mathbb E} \left[ (\iota_{\scriptscriptstyle 1} \punt \ubs_1)^{\ibs_1} \cdots (\iota_{\scriptscriptstyle d} \punt \ubs_d)^{\ibs_d} \right].
\label{(expansion1)}
\end{equation}
In order to evaluate products on the right hand side of (\ref{(expansion1)}), equation (\ref{(eq:16)})
has to be employed. Since $\iota_{\scriptscriptstyle j} \punt \ubs_j \equiv \iota_{\scriptscriptstyle j} \punt \beta \punt \chi \punt \ubs_j$ and $\chi \punt \ubs_j \equiv \chibs_j,$ in evaluating ${\mathbb E} \left[ \prod_{j=1}^d (\iota_{\scriptscriptstyle j}
\punt \ubs_j)^{\ibs_j} \right]$ only joint products $\iota_{\scriptscriptstyle 1}^{k_1} \cdots \iota_{\scriptscriptstyle d}^{k_d}$ gives contribution, from which the result follows.
\end{proof}
Joint cumulants can be computed by using the additivity property on waves.
\begin{thm} \label{cumthm} If $\kbs \in \mathbb{N}_0^d,$ then
\begin{equation}
\hbox{\rm Cum}_{\kbs}(\Nbs^{\sr}) =  \kbs! \sum_{i=1}^p  \sum_{\lambdabs \in P_{\kbs}}
\frac{(-1)^{l(\lambdabs)-1}[l(\lambdabs)-1]!}{\ml(\lambdabs)! \lambdabs!} \, \prod_{\lambdabs_s}  E \left[ \prod_{j=1}^d
\left[ \sum_{t=1}^{(\lambdabs_s)_j} S((\lambdabs_s)_j,t) | X_{i j} + m_{i j}|^{2t} \right] \right]^{r_s},
\label{(2cum)}
\end{equation}
where $P_{\ibs} = \{\lambdabs = (\lambdabs_1^{r_1}, \lambdabs_2^{r_2}, \ldots) \vdash \kbs: (\lambdabs_s)_j \ne 0, \forall j=1, \ldots,d, s=1,2,\ldots \}.$
\end{thm}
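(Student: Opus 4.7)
The plan is to reduce the statement to additivity of cumulants over independent waves combined with the multivariate moment-to-cumulant inversion, using Corollary \ref{6.4} to express the joint moments of a single wave.

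First, using the decomposition in equation \eqref{(super)}, I would write $\Nbs^{\sr}$ as the component-wise sum $\sum_{i=1}^p \Nbs^{(i)}$, where $\Nbs^{(i)} = (N_1^{(i)}, \ldots, N_d^{(i)})$ has components $N_j^{(i)}$ that are conditionally Poisson with random intensity $|X_{ij}+m_{ij}|^2$. Since the pairs $\{(\Xbs_i,\mbs_i)\}_{i=1}^p$ are independent, the vectors $\Nbs^{(i)}$ are independent random vectors, and additivity of joint cumulants yields $\hbox{\rm Cum}_{\kbs}(\Nbs^{\sr}) = \sum_{i=1}^p \hbox{\rm Cum}_{\kbs}(\Nbs^{(i)})$, which produces the outer sum over $i$ in \eqref{(2cum)}.

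Second, for each fixed $i$, I would apply Corollary \ref{6.4} to the single-wave vector $\Nbs^{(i)}$ (the case $p=1$ with intensity vector $(|X_{i1}+m_{i1}|^2, \ldots, |X_{id}+m_{id}|^2)$) to obtain, for any $\jbs \in \mathbb{N}_0^d$, the joint moment $m_{\jbs}^{(i)} := E[(\Nbs^{(i)})^{\jbs}] = E\bigl[\prod_{\ell=1}^d \sum_{t=1}^{j_\ell} S(j_\ell,t)\,|X_{i\ell}+m_{i\ell}|^{2t}\bigr]$. By Corollary \ref{momzero}, $m_{\jbs}^{(i)}=0$ whenever some coordinate of $\jbs$ vanishes.

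Third, I would invert joint moments into joint cumulants via the standard multi-index formula obtained from \eqref{(genfun2)} by expanding $\log(1+F) = \sum_{n\geq 1}(-1)^{n-1}F^n/n$, namely $\hbox{\rm Cum}_{\kbs}(\Nbs^{(i)}) = \kbs! \sum_{\lambdabs \mmodels \kbs} \frac{(-1)^{l(\lambdabs)-1}(l(\lambdabs)-1)!}{\ml(\lambdabs)!\,\lambdabs!} \prod_s (m_{\lambdabs_s}^{(i)})^{r_s}$. Within the umbral formalism this is precisely the $\kbs$-th moment of the cumulant umbra $\chi \punt \nubs^{(i)}$ of a $d$-tuple $\nubs^{(i)}$ representing $\Nbs^{(i)}$, computed from \eqref{(eq:15)} with $\gamma$ replaced by $\chi$, whose moments are $\{(-1)^{k-1}(k-1)!\}$. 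The vanishing property from step two then restricts the sum to partitions whose columns all have strictly positive entries, i.e.\ to $\lambdabs \in P_{\kbs}$. Substituting $m_{\lambdabs_s}^{(i)}$ and summing over $i=1,\ldots,p$ gives \eqref{(2cum)}.

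The main obstacle is step three: the multivariate moment-to-cumulant inversion has not been recorded explicitly earlier in the excerpt and must be derived either by a direct formal logarithm expansion of the joint moment generating function or through the umbral representation $\chi \punt (\iota_{\scriptscriptstyle 1} \punt \beta \punt \ubs_1 + \cdots + \iota_{\scriptscriptstyle d} \punt \beta \punt \ubs_d)$ together with the multi-index dot-product \eqref{(eq:15)}. Additional care is needed in tracking the coefficients $\kbs!/(\ml(\lambdabs)!\,\lambdabs!)$ through the multinomial expansions and in justifying the reduction from all partitions of $\kbs$ to those in $P_{\kbs}$ via Corollary \ref{momzero}.
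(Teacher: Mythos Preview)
Your proposal is correct and follows essentially the same route as the paper: additivity over the $p$ waves, then the multivariate moment-to-cumulant inversion applied to the single-wave moments given by Corollary~\ref{6.4}, with Corollary~\ref{momzero} restricting the partition sum to $P_{\kbs}$. The only cosmetic difference is that the paper establishes the additivity step in the umbral language (rewriting $\sum_j \iota_j \punt \beta \punt \ubs_j$ via the uncorrelated monomials $\gamma_{ij}$ representing $|X_{ij}+m_{ij}|^2$ and using distributivity of the dot-product) rather than by a direct probabilistic independence argument, and it quotes the inversion formula ${\mathbb E}[(\chi \punt \nubs)^{\kbs}]=\sum_{\lambdabs \vdash \kbs}\frac{\kbs!}{\ml(\lambdabs)!\lambdabs!}(-1)^{l(\lambdabs)-1}[l(\lambdabs)-1]!\,a_{\lambdabs}$ from \cite{Dinardo3} rather than deriving it from the logarithmic expansion you outline.
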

\begin{proof}
First, let us prove the additivity property
\begin{equation}
\hbox{\rm Cum}_{\kbs}(\Nbs^{\sr}) = \sum_{i=1}^p \hbox{\rm Cum}_{\kbs}(\Nbs_{\scriptscriptstyle i}^{\scriptscriptstyle [1]}),
\label{(addcum)}
\end{equation}
where $\hbox{\rm Cum}_{\kbs}(\Nbs_{\scriptscriptstyle i}^{\scriptscriptstyle [1]})$ is the $\kbs$-th cumulant of the multivariate photocounter of the $i$-th wave with $i=1, \ldots,p.$  Indeed, denote by $\gamma_{i j}$ the umbral monomial representing $|X_{i j} + m_{i j}|^2$ in (\ref{(4)}). Then $\iota_{\scriptscriptstyle 1} \punt \beta \punt \ubs_{\scriptscriptstyle 1} + \cdots + \iota_{{\scriptscriptstyle d}} \punt \beta \punt \ubs_{\scriptscriptstyle d} = \sum_{j=1}^d (\gamma_{1 \, j} + \cdots + \gamma_{p \, j}) \punt \beta \punt \ubs_j.$ Since for fixed $j$ the umbral monomials $\{\gamma_{1 \, j}, \ldots, \gamma_{p \, j} \}$ are uncorrelated, equation \eqref{(addcum)}
follows by observing that $\sum_{j=1}^d (\gamma_{1 \, j} + \cdots + \gamma_{p \, j}) \punt \beta \punt \ubs_j \equiv \sum_{j=1}^d (\gamma_{1 \, j} \punt \beta \punt \ubs_j + \cdots + \gamma_{p \, j} \punt \beta \punt \ubs_j) = \sum_{i=1}^p (\gamma_{i \, 1} \punt \beta \punt \ubs_1 + \cdots + \gamma_{i \, d} \punt \beta \punt \ubs_d).$ Since $(\gamma_{i \, 1} \punt \beta \punt \ubs_1 + \cdots + \gamma_{i \, d} \punt \beta \punt \ubs_d)$ denotes the multivariate photocounter from $d$ pixels hit by the $i$-th wave with $i=1, \ldots,p,$ the result follows from the additivity property of multivariate cumulants. To get equation \eqref{(2cum)}, the explicit expression of
$\hbox{\rm Cum}_{\kbs}(\Nbs_{\scriptscriptstyle i}^{\scriptscriptstyle [1]})$ has to be computed by evaluating
the $\kbs$-th moment of $\chi \punt (\gamma_{i \, 1} \punt \beta \punt \ubs_1 + \cdots + \gamma_{i \, d} \punt \beta \punt \ubs_d).$
The $\kbs$-th moment of $\chi \punt \nubs$ is \cite{Dinardo3}
$$
{\mathbb E}[(\chi \punt \nubs)^{\kbs}] =  \sum_{\lambdabs \mmodels \kbs}
\frac{\kbs!}{\ml(\lambdabs)! \lambdabs!} \, (-1)^{l(\lambdabs)-1} [l(\lambdabs)-1]! \, a_{\lambdabs},
$$
with $a_{\lambdabs}$ the product of multivariate moments of $\nubs$ associated to $\lambdabs.$ The result follows by replacing $a_{\lambdabs}$ with the product of multivariate moments of  $(\gamma_{i \, 1} \punt \beta \punt \ubs_1 + \cdots + \gamma_{i \, d} \punt \beta \punt \ubs_d)$ associated to $\lambdabs.$ Moments of $(\gamma_{i \, 1} \punt \beta \punt \ubs_1 + \cdots + \gamma_{i \, d} \punt \beta \punt \ubs_d)$ are given in Corollary \ref{6.4} with $\iota_j$ replaced by $\gamma_{i \, j}$.
\end{proof}
Within estimation, the importance of Theorem \ref{cumthm} relies on the circumstance that if estimators of $\hbox{\rm Cum}_{\kbs}(\Nbs^{\sr})$ linearizes, according to \eqref{(addcum)} the underlying distribution of photocounters can be assumed of mixed multivariate Poisson type. Unbiased estimators of $\hbox{\rm Cum}_{\kbs}(\Nbs_{\scriptscriptstyle i}^{\scriptscriptstyle [1]})$ and $\hbox{\rm Cum}_{\kbs}(\Nbs^{\sr})$ can be computed  using multivariate polykays \cite{Dinardo}.
\begin{thm} \label{distrb}
The multivariate Poisson-Mandel transform admits the following expansion in series
$$
{\mathbb P}(\Nbs = \kbs) = \frac{1}{\kbs!} \left(1 + \sum_{i \geq 1} \,  \sum_{|\jbs|=i} E[\Ibs_{\jbs + \kbs}] \frac{(-1)^{|\jbs|}}{\jbs!}\right) = \frac{1}{\kbs!} \left(1 + \sum_{i \geq 1} \,  \sum_{|\jbs|=i} \frac{(-1)^{|\jbs|}}{\jbs!}
\sum_{\lambdabs \mmodels \kbs + \ibs} \frac{(\jbs+\kbs)!}{\ml(\lambdabs)! \lambdabs!} \, c_{\lambdabs}
\right)
$$
with $c_{\lambdabs}$ the product of multivariate cumulants of $\Ibs$ associated to $\lambdabs.$
\end{thm}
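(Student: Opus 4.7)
The plan is to derive the expansion directly from the defining integral \eqref{(2)}, then convert moments to cumulants via the multivariate moment-cumulant formula that is already built into the symbolic machinery.

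For the first equality I would start from \eqref{(2)} and expand each exponential in the integrand, using the absolutely convergent series $e^{-I_j} = \sum_{j_\ell \geq 0} (-I_j)^{j_\ell}/j_\ell!$. Multiplying out and collecting terms by multi-index gives, pointwise in $\Ibs$,
$$
\prod_{j=1}^d \frac{I_j^{k_j}}{k_j!}\,e^{-I_j} \,=\, \frac{1}{\kbs!}\sum_{\jbs \in \mathbb{N}_0^d} \frac{(-1)^{|\jbs|}}{\jbs!}\, \Ibs^{\jbs+\kbs}.
$$
Integrating against $\mu({\rm d}\Ibs)$ and exchanging sum and integral produces
$$
{\mathbb P}(\Nbs = \kbs) \,=\, \frac{1}{\kbs!}\left(1 + \sum_{i \geq 1}\,\sum_{|\jbs|=i} \frac{(-1)^{|\jbs|}}{\jbs!}\,E[\Ibs^{\jbs+\kbs}]\right),
$$
which is the first of the two claimed expressions (the $\jbs = \bm{0}$ term contributes the leading $1$).

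For the second equality I would invoke the multivariate moment-cumulant identity that is a special case of \eqref{(eq:15)}. Specialising \eqref{(eq:15)} to $\gamma = u$, the unity umbra, one has $g_{l(\lambdabs)}=1$ for all $\lambdabs$, and $u \punt \nubs \equiv \nubs$, so
$$
a_{\mbs} \,=\, \sum_{\lambdabs \mmodels \mbs} \frac{\mbs!}{\ml(\lambdabs)!\,\lambdabs!}\,c_{\lambdabs}
$$
holds for any $d$-tuple $\nubs$ with moment sequence $\{a_{\mbs}\}$ and cumulant sequence $\{c_{\mbs}\}$. Applying this identity to $\nubs = \iotabs$ with $\mbs = \jbs + \kbs$, one replaces $E[\Ibs^{\jbs+\kbs}]$ by $\sum_{\lambdabs \mmodels \jbs+\kbs}\frac{(\jbs+\kbs)!}{\ml(\lambdabs)!\,\lambdabs!}\,c_{\lambdabs}$ and substitutes into the first expression to obtain the second.

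The only real obstacle is the term-by-term integration used to pass from the pointwise expansion to an expectation-level identity. This is automatic whenever the joint moment generating function of $\Ibs$ has a positive radius of convergence, which is the case for the Wishart model of Section~2 that motivates the theorem; the statement is therefore to be read as an identity valid under that analyticity assumption, or otherwise as a formal series identity between moment and cumulant sequences. Once the swap is granted, both equalities are immediate.
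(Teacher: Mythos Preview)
Your proof is correct and matches the paper's own argument essentially line for line: the paper also expands $\exp\{-(I_1+\cdots+I_d)\}$ as a multivariate power series to get the first equality, and then invokes \eqref{(eq:15)} with $g_{l(\lambdabs)}=1$ (justified there via $\nubs \equiv (\beta\punt\chi)\punt\nubs$, which is your $u\punt\nubs \equiv \nubs$ since $\beta\punt\chi\equiv u$) for the second. Your added remark on the analyticity needed to justify the sum--integral interchange is a welcome clarification that the paper leaves implicit.
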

\begin{proof}
The first equality follows from \eqref{(2)} by observing that
$$\exp\{-(I_1 + \cdots + I_d)\} =  1 + \sum_{i \geq 1} \,  \sum_{|\jbs|=i} \Ibs^{\jbs} \frac{(-1)^{|\jbs|}}{\jbs!}.$$
The second equality follows from equation \eqref{(eq:15)} with $g_{l(\lambdabs)}$ replaced by $1,$ since $\nubs \equiv (\beta \punt \chi) \punt \nubs.$
\end{proof}
If sampled intensities are available, multivariate polykays for $\Ibs$ can be employed to approximate $c_{\lambdabs}$ in
Theorem \ref{distrb} and also factorial cumulants of $\Nbs,$ as the following theorem shows.
\begin{thm}
$\hbox{\rm FCum}_{\kbs} \left(\Nbs \right) = c_{\kbs}.$
\end{thm}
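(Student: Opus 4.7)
The plan is to reduce the statement to Corollary \ref{chibf} together with the symbolic representation of cumulants. Concretely, by Corollary \ref{chibf} the $\kbs$-th multivariate factorial moment of $\Nbs$ coincides with the $\kbs$-th multivariate moment of $\Ibs$:
\begin{equation*}
{\mathfrak f}_{\kbs} \;=\; E\left[I_1^{k_1}\cdots I_d^{k_d}\right] \;=\; {\mathbb E}\bigl[\iotabs^{\kbs}\bigr].
\end{equation*}
Thus the sequences $\{{\mathfrak f}_{\kbs}\}$ and $\{E[\Ibs^{\kbs}]\}$ are identical, which means the factorial moment generating function of $\Nbs$ equals the moment generating function of $\Ibs$. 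Taking logarithms and invoking \eqref{(genfun2)}, the $\kbs$-th factorial cumulant of $\Nbs$ must equal the $\kbs$-th multivariate cumulant $c_{\kbs}$ of $\Ibs$.

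Equivalently, I would phrase the argument entirely within the symbolic method. The factorial moments of $\Nbs$ are represented by the $d$-tuple $\iotabs$ itself (this is what Corollary \ref{chibf} asserts, since after the identifications $\beta\punt\chibs_j\equiv\ubs_j$ one obtains $\iota_1\punt\ubs_1+\cdots+\iota_d\punt\ubs_d$, whose multivariate moments are exactly $E[\Ibs^{\kbs}]$). By the general definition of the cumulant umbra, the factorial cumulants of $\Nbs$ are then represented by $\chi\punt\iotabs$, and moments of $\chi\punt\iotabs$ are by definition the multivariate cumulants $c_{\kbs}$ of $\iotabs$ (equivalently of $\Ibs$). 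This gives $\hbox{\rm FCum}_{\kbs}(\Nbs)=c_{\kbs}$.

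There is essentially no obstacle here: once Corollary \ref{chibf} is in hand, the statement is a one-line consequence of the relation between moments and cumulants (respectively between factorial moments and factorial cumulants). The only thing to be careful about is to make explicit, at least in a sentence, that the substitution $z_i=\log(1+w_i)$ used in the proof of Theorem \ref{teorema2} is precisely the change of variable that converts the cumulant generating function of $\Nbs$ into its factorial cumulant generating function, so that the identity of generating functions transfers from the moment/factorial-moment level to the cumulant/factorial-cumulant level.
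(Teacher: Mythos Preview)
Your proposal is correct and follows essentially the same route as the paper: both arguments invoke Corollary \ref{chibf} to identify the factorial moments of $\Nbs$ with the moments of $\Ibs$, and then pass to cumulants by taking logarithms of the corresponding generating functions (the paper does this explicitly via the substitution $z_i=\log(1+w_i)$ in $K(\Nbs,\zbs)$, exactly as you note in your final paragraph). Your alternative symbolic phrasing via $\chi\punt\iotabs$ is a harmless reformulation of the same step.
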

\begin{proof}
Denote by $K(\Nbs,\zbs)$ the cumulant generating function of $\Nbs,$ with $K(\Nbs,\zbs) = \log[f(\Nbs,\zbs)].$ The factorial cumulant generating function is
$$K(\Nbs, \zbs) \bigl|_{z_i=\log(1+w_i) \atop i=1,2,\ldots,d} = \log[f(\Nbs,\zbs)] \bigl|_{z_i=\log(1+w_i) \atop i=1,2,\ldots,d} = \log [f(\iotabs,\zbs)],$$
where the last equality follows from Corollary \ref{chibf}.
\end{proof}
\section{Conclusions and open problems}
This paper has introduced the symbolic method of moments as an efficient tool to deal with photon statistics.
Instead of using factorial moments, as usually proposed in the literature, cumulants have been employed due to their additivity property
which simplifies the inference on the process as well as factorial cumulants. Moreover these sequences have a simpler expression for Wishart random matrices whose traces are the intensities of photocounting. In addition, the algorithms available for computing unbiased estimators of cumulants, that are polykays and multivariate polykays, have efficient implementation.

Several open problems arise from the methodology suggested in this article. As investigated in \cite{Letac}, there is a connection between Wishart random matrices and natural exponential families. So distributions of photocounters could be studied by using properties of natural exponential families. A first attempt in this direction has been presented in \cite{Chatelain} and \cite{Ferrari}, where only special cases are considered.  A more general development, involving cumulants, could turn to be useful in finding sufficient photon statistics relied on maximum likelihood method. The use of the symbolic method of moments within these applications is currently under investigation.

A further development consists in studying photocounting vs. time, through a compound Poisson stochastic process. The so-called compensated version consists in normalizing the process with respect to its temporal mean. Cumulants play a fundamental role in
dealing with compensated Poisson process \cite{Peccati} and we believe that the techniques here introduced should be fruitfully applied.


\end{document}